\definecolor{link}{rgb}{1,0.45,0.05}
\newcommand{\R}{\mathbb{R}}
\newcommand{\ii}{\mathrm{i}}
\newcommand{\dd}{\mathrm{d}}
\newcommand{\ee}{\mathrm{e}}
\newcommand{\N}{\mathbb{N}}
\newcommand{\mm}{\mathfrak m}
\newcommand{\di}{\mathsf{d}}
\newcommand{\sfd}{\mathsf d}
\newcommand{\Ch}{\mathsf{Ch}}
\theoremstyle{plain}
\newtheorem{lemma}{Lemma}[section]
\newtheorem{proposition}[lemma]{Proposition}
\newtheorem{corollary}[lemma]{Corollary}
\newtheorem*{theorem*}{Theorem}
\newtheorem*{maintheorem*}{Main Theorem}
\theoremstyle{definition}
\newtheorem{definition}[lemma]{Definition}
\newtheorem*{definition*}{Definition}
\newtheorem*{remark*}{Remark}
\newtheorem{remark}[lemma]{Remark}
\newtheorem{example}{Example}
\begin{document}

	\begin{titlepage}

	\begin{center}

	\vskip .5in %.3in 
	\noindent

	{\Large \bf{Harmonic functions and gravity localization}}

	\bigskip\medskip
	 G. Bruno De Luca,$^1$  Nicol\`o De Ponti,$^2$
	Andrea Mondino,$^3$	Alessandro Tomasiello$^{2,4}$\\

	\bigskip\medskip
	{\small 
$^1$ Stanford Institute for Theoretical Physics, Stanford University,\\
382 Via Pueblo Mall, Stanford, CA 94305, United States
\\	
	\vspace{.3cm}%{.1cm}
	$^2$ 
	Dipartimento di Matematica e Applicazioni, Universit\`a di Milano--Bicocca, \\ Via Cozzi 55, 20126 Milano, Italy; 
\\	
	\vspace{.3cm}%{.1cm}
	$^3$ Mathematical Institute, University of Oxford, Andrew-Wiles Building,\\ Woodstock Road, Oxford, OX2 6GG, UK
\\
	\vspace{.3cm}
$^4$ INFN, sezione di Milano--Bicocca
		}

   \vskip .5cm %.3cm
	{\small \tt gbdeluca@stanford.edu, nicolo.deponti@unimib.it,\\ andrea.mondino@maths.ox.ac.uk, alessandro.tomasiello@unimib.it}
	\vskip .9cm %.6cm
	     	{\bf Abstract }
	\vskip .1in
	\end{center}

	\noindent
	In models with extra dimensions, matter particles can be easily localized to a `brane world', but gravitational attraction tends to spread out in the extra dimensions unless they are small. Strong warping gradients can help localize gravity closer to the brane. 
	In this note we give a mathematically rigorous proof that the internal wave-function of the massless graviton is constant as an eigenfunction of the weighted Laplacian, and hence is a power of the warping as a bound state in an analogue Schr\"odinger potential. This holds even in presence of singularities induced by thin branes. 

	We also reassess the status of AdS vacuum solutions where the graviton is massive. We prove a bound on scale separation for such models, as an application of our recent results on KK masses. We also use them to estimate the scale at which gravity is localized, without having to compute the spectrum explicitly. For example, we point out that localization can be obtained at least up to the cosmological scale in string/M-theory solutions with infinite-volume Riemann surfaces; and in a known class of ${\mathcal N}=4$ models, when the number of NS5- and D5-branes is roughly equal.
		
	\noindent

	\vfill
	\eject

	\end{titlepage}
    
\tableofcontents

\section{Introduction} % (fold)
\label{sec:intro}

The existence of additional spacetime dimensions is a fascinating possibility that keeps attracting attention in theoretical physics. It is suggested by string theory, and would have dramatic consequences at very small length scales. 

For matter fields, there are two ways to avoid conflict with current observations. One is to assume that the extra dimensions describe a compact space $X_n$ of small size. Indeed, when spacetime is a direct product $M_4\times X_n$, several theorems give lower bounds on the lowest eigenvalue of the Laplace--Beltrami and other operators, guaranteeing that the Kaluza--Klein (KK) masses are large.\footnote{Here ``size'' can be understood both as $\mathrm{Vol}(X_n)^{1/n}$ or as $\mathrm{diam}(X_n)$, the largest distance between any two points. We will focus on the case where the external spacetime $M_4$ has four spacetime dimensions, but all our arguments are readily generalized.} In particular, if the masses of the spin-two KK fields are large, the usual $1/r^2$ behavior of gravity will only be modified at very small distances. The observed four-dimensional Planck mass $m_4$ is 
$$m_4= \sqrt{m_D^{D-2} \, \mathrm{Vol}(X_n)}\,,$$ 
where $m_D$ is the Planck mass of the $D$-dimensional gravity model. 

An alternative is to assume that matter particles are somehow stuck to a four-dimensional defect inside the higher-dimensional space. In this situation, their fields are ``localized'': they don't even depend on the extra dimensions, and there are no KK modes. There is however an exception: the metric field, which does depend on the extra dimensions and gives rise to KK modes. If the latter are light, they can modify Newton's law at large distances.

However, there is an interesting possible remedy to this problem, a way to ``localize'' gravity as well on a defect. This originates from the so-called Randall--Sundrum 2 (RS2) model \cite{randall-sundrum2}. It involves \emph{warped} products, spacetimes with a metric 
\begin{equation}\label{eq:vac}
	\dd s^2= \ee^{2A}(\dd s^2_{M_4} + \dd s^2_{X_n})\,,
\end{equation}
with the warping function $A$ depending on $X_n$. While in the direct product case $A=0$ and $X_n$ with infinite volume would lead to $m_4\to \infty$ and thus a non-dynamical graviton, this can be avoided with $A\neq 0$, where $m_4^2=m_D^{D-2}\int_{X_n} \sqrt{g_n} \ee^{(D-2)A}$, and it is sufficient to require that the integral is finite. The spin-two spectrum is given by the eigenvalues of a weighted Laplacian defined by $\Delta_A \xi = -\ee^{-(D-2)A} \nabla^m (\ee^{(D-2) A} \nabla_m\xi)$ \cite{bachas-estes,csaki-erlich-hollowood-shirman}. It is often studied by mapping it to a Schr\"odinger operator $(\Delta_0+ V)$, upon rescaling $\xi= s \psi$ by an appropriate function $s$, related to a power of $\ee^A$. The massless graviton corresponds to $\xi=\xi_0$, a constant; the corresponding $\psi_0$ is in $L^2(X_n)$, related to the finiteness of $m_4$. A peak in $\psi_0$ is interpreted intuitively as the graviton propagating preferentially around a defect of $X_n$, thus effectively keeping gravity four-dimensional. In the original RS2 model \cite{randall-sundrum2}, to be reviewed below in Section \ref{sub:mink}, one takes $D=5$, $X_n=\mathbb{R}$, and $A$ piecewise linear; $M_4$ is the Minkowski space, and the five-dimensional spacetime is obtained by gluing two pieces of AdS$_5$. The graviton wave-function $\psi_0$ is peaked near the origin. While the rest of the KK spectrum is continuous, the contributions of the massive spin-two fields are suppressed: their rescaled wave-functions $\psi_k$, $k>0$ are small near the origin of $\mathbb{R}$, which in turn can be seen from the peculiar shape of $V$, often called a ``volcano'' potential. 

The RS2 model has analogs with non-zero cosmological constant $\Lambda_4$: the Karch--Randall (KR) models \cite{karch-randall}. For $\Lambda_4>0$ the continuous part of the spectrum has a mass gap. For $\Lambda_4<0$, the integral $\int_{X_n} \sqrt{g_n} \ee^{(D-2)A}$ diverges. The lightest spin-two mass field is not massless, but is still much lighter than the rest of the KK tower. Moreover, the $\psi_{k>0}$ are still concentrated near the origin. These two effects combine to still give localization for small enough $|\Lambda_4|$.  We will review the RS2 and KR models in Section \ref{sec:rev}, using them to illustrate some general results that we will find useful later.

The first new result in this paper regards a further version of localization. On a smooth compact space, the zero mode of the (weighted) Laplace operator is easily seen to be constant by an integration-by-parts argument. In presence of singularities, this is not quite so obvious. For example, it was pointed out in \cite{crampton-pope-stelle} that the analogue potential $V$ is often $\sim\rho^{-2}$ near the origin; this can give rise to interesting bound states, depending on the self-adjoint extension one chooses \cite{essin-griffiths,derezinski-richard}. The four-dimensional Planck mass would then read
\begin{equation}\label{eq:m0-cps}
	m_4^2=m_D^{D-2}\int_{X_n} \dd x^n \sqrt{g_n} \ee^{(D-2)A} |\xi_0|^2\,.
\end{equation}
A normalizable $\xi_0$ would then make $m_4$ finite even if $\ee^{(D-2)A}$ is itself not normalizable.

Unfortunately, in Section \ref{sec:const} we will present rigorous mathematical arguments that go against this possibility. First we consider the definition of weighted Laplacian that is commonly considered in the theory of metric measure spaces, a class of possibly non-smooth spaces endowed with a reference distance and measure. Whenever it is linear, an assumption satisfied in the most relevant physical situations, this Laplacian arises via integrating by parts a suitable generalization of the Dirichlet energy (known in this framework as Cheeger energy),  and is automatically self-adjoint on its finiteness domain, leading to the usual spectral theorems familiar from the smooth case. Thus it is both natural from a mathematical point of view, and well suited for physics applications. In this context, we prove (Prop.~\ref{prop:ManSing}) that for spaces of interest in gravity compactifications the only eigenfunctions $\xi_0$ with zero eigenvalue are in fact the constant ones.

From this perspective, there is in a sense no need to select boundary conditions on the singularities; they are automatically selected by the metric-measure formalism. It is instructive to compare this with a point of view more similar to that advocated in \cite{crampton-pope-stelle}; namely, removing the singular locus $\mathcal{S}$ and working with the space of smooth functions with compact support on $X\setminus {\mathcal S}$. In Sec.~\ref{sub:polar} we find that this latter notion yields the former with Dirichlet boundary conditions on  ${\mathcal S}$, and the two are essentially equivalent (more precisely to obtain the former one needs to take the metric completion of the latter endowed with the $W^{1,2}$ norm), when the singular locus  $\mathcal{S}$ is a D$p$-brane for $p=1,\ldots, 7$. 

Finally, in Section \ref{sec:st} we consider models that realize localization in string theory, focusing on the AdS case. Using theorems we proved in \cite{deluca-deponti-mondino-t}, we first show that if the lightest spin-two $m_0 \neq 0$ is smaller than the cosmological constant, then the next mass $m_1$ cannot be arbitrarily large:
\begin{equation}\label{eq:intro-Sep}
	\frac{m_1^2}{|\Lambda_4|}<\frac{3528}{25}\left(1+ (D-2) \frac{(\mathrm{sup}|\dd A|)^2}{|\Lambda_4|}\right)\,.
\end{equation}
The norm of $\dd A$ is computed with respect to the $\dd s^2_{X_n}$ metric in (\ref{eq:vac}). (Generically one expects this last term to be small for solutions that are under control in supergravity, although there can be exceptions.)

More generally, the results in \cite{deluca-deponti-mondino-t} allow us to estimate the masses (and thus the extent of localization) without actually computing them. We apply this to two classes of models, which are the non-compact analogues of those we considered in \cite{deluca-deponti-mondino-t}. The first one consists in any compactification on Riemann surfaces, such as the usual Maldacena--Nu\~nez solution, where the internal space is a fibration over a Riemann surface of  infinite volume. The second class is ${\mathcal N}=4$-supersymmetric, and was worked out in this context by Bachas and Lavdas \cite{bachas-lavdas2}. In both cases one can achieve $m_0 \ll \Lambda_4$, while $m_1 \sim O(\Lambda_4)$. This means localization is achieved, but only for distances larger than the cosmological $L_4$. Unfortunately in these models one cannot make $m_1$ even larger, essentially because of (\ref{eq:intro-Sep}). We discuss how the aforementioned wave-function concentration mechanism might work for some modes; this would improve the situation and push the localization length scale lower. This goes out of the scope of the present work; it would be very interesting to pursue it further in the future.

% section intro (end)

\section{Effective models of gravity localization} % (fold)
\label{sec:rev}

In this section we will mostly review five-dimensional effective models that display gravity localization in various forms. We will end in Sec.~\ref{sub:pot} with some considerations about the analogue Schr\"odinger potential; this will serve as an introduction to the mathematical problem we will tackle in Sec.~\ref{sec:const}.

\subsection{Minkowski brane world} % (fold)
\label{sub:mink}

We begin with the most famous model of gravity localization, RS2 \cite{randall-sundrum2}. It consists of five-dimensional gravity with a cosmological constant and a four-dimensional defect: 
\begin{equation}\label{eq:rs2-action}
	m_5^{-3} S = \int \dd^5 x \sqrt{-g_5}\left( R_5 +\frac{12}{L_5^2}\right) - 4\lambda \int \dd^4 x \sqrt{-g_4}\,.
\end{equation}	

If the tension of the defect is tuned to $\lambda= 3/L_5$, a solution of the Einstein equations is
\begin{equation}\label{eq:rs2-metric}
	\dd s^2_5 = \ee^{2A} \dd s^2_{\mathrm{Mink}_4} + \dd r^2 \, ,\qquad A = -\frac{|r|}{L_5}\,.
\end{equation}
This is a warped product, as defined in the introduction. We recognize two pieces of AdS$_5$, glued together at $r=0$. 

As mentioned in the introduction, the (square) Planck mass $m_4^2=m_5^{3}\int \dd r\,\ee^{3A}$ in this case is finite. The spectrum of KK modes is obtained by analyzing the operator $\Delta_A = -\ee^{-2A} \partial_r (\ee^{4 A} \partial_r(\,\cdot\,) )$. Besides the expected $\mu=0$ eigenvalue, the continuous part of the spectrum is $\mathbb{R}_{>0}$, with no spectral gap  \cite{randall-sundrum2}.\footnote{We refer to section \ref{sec:spectral} for all the necessary terminology about spectral theory.} While this might appear discouraging for localization, the formal eigenfunctions associated with the continuous spectrum are small near $r=0$. This effect counteracts enough the absence of a spectral gap that the gravitational force localizes.

In general, the eigenvalue problem for a weighted Laplacian can be mapped to a Schr\"odinger problem as follows:
\begin{equation}\label{eq:schroedinger}
	-\ee^{-f}\nabla^m\left(\ee^f\nabla_m \xi\right) = \mu \xi\, ,\hspace{0.7cm} \xi=\ee^{-f/2}\psi
	\Leftrightarrow\ \left(\Delta_0 + U\right)\psi = \mu \psi \, ,\hspace{0.7cm} U= -\ee^{-f/2}\Delta_0 \ee^{f/2}\, . 
\end{equation}
To apply this to the spin-two operator and use the same conventions as in \cite{randall-sundrum2}, we first change coordinate such that $\dd r = \ee^{A}\dd z$, and (\ref{eq:rs2-metric}) becomes conformally flat, $\ee^{2A} (\dd s^2_{\mathrm{Mink}_4} + \dd z^2)$. Then taking $f=3A$ we obtain
\begin{equation}
	U= \frac{15}{4L_5^2} \left(1+\frac{|z|}{L_5}\right)^{-2}- \frac3{L_5} \delta(z)\,.
\end{equation}
The shape of this effective potential has earned it the moniker of \emph{volcano}: indeed it has a peak at the origin, and a negative delta that one can think of as a very thin and deep ``pipe''. The obvious $\mu=0$ eigenfunction $\xi=1$ becomes the single bound state $\psi_0=\ee^{3A/2}$, whose presence is allowed by the delta function. More importantly, the peak of $U$ gives an intuitive reason for the aforementioned suppression of the generalized eigenfunctions of the higher modes. Moreover, the presence of a continuous spectrum starting at $\mu=0$ can be explained by $U\to 0$ as $z\to \pm \infty$. 

In this case the spectrum is easy to analyze directly. In more complicated geometries with additional extra dimensions this is not always the case, and it is useful to have estimates for the KK masses, especially the low-lying ones. A general theory is available that provides bounds for the eigenvalues of a weighted Laplacian in terms of the internal diameter, weighted volume, or of the so-called \emph{Cheeger constants}. Informally, the latter quantify how much the internal space can be divided in pieces with small boundary and big bulk. When the weighted volume of the internal dimensions $\mathrm{Vol}_A(X):= \int \dd^n x \sqrt{g} \ee^{(D-2)A}$ (and hence the Planck mass) is finite, the first non-trivial Cheeger constant is
\begin{equation}\label{eq:h1}
	h_1 := \inf_{B} \frac{\text{Vol}_A(\partial B)}{\text{Vol}_A(B)} \, ,\qquad\text{Vol}_A( B)\leqslant \frac12 \mathrm{Vol}_A (X)\,.
\end{equation}
Here $\text{Vol}_A(\partial B) := \int_{\partial B} \sqrt{g|_{\partial B}}\ee^{(D-2)A}$, and $\text{Vol}_A(B) := \int_{B} \sqrt{g}\ee^{(D-2)A}$. As in the introduction, here the internal metric $\dd s^2_{X_n}$ is defined by $\dd s^2= \ee^{2A}(\dd s^2_{M_4} + \dd s^2_{X_n})$.

The origin of this bound lies in the variational approach to the eigenvalues; a $B$ is associated to a trial wavefunction with support over it. In general it is hard to compute $h_1$ exactly, as it involves minimization over infinitely many choices of $B$. But for the RS2 model we can consider $B=(r_0,\infty)$, and with $r_0>0$
\begin{equation}
	h_1 =\inf_{r_0} \frac{\ee^{3A(r_0)}}{\int_{r_0}^{\infty} \dd r\,\ee^{2A(r)}}= \inf_{r_0} \frac2{L_5}\ee^{-\frac{r_0}{L_5}}=0\,.
\end{equation}
Now \cite[Th.~4.2]{deluca-deponti-mondino-t} implies that the infimum of the continuous spectrum is zero.\footnote{That theorem assumes the internal space to have a property called $\mathrm{RCD}(K<0,\infty)$; this can be proven in a similar fashion as for D8-branes in string theory \cite[Th.~4.2]{deluca-deponti-mondino-t}.} 

The four-dimensional gravitational potential between two matter particles $M_1$, $M_2$ at $r=0$ is obtained from the two-point correlation function of two metric fluctuations $\delta g_{\mu \nu}$. Expanding the latter in spin-two KK modes with eigenfunctions $\psi_k(r)$ and masses $m_k$, one would obtain in general
\begin{equation}\label{eq:massive-V}
	V = \frac{G M_1 M_2}R \sum_{k=0}\ee^{-m_k R}\psi_k^2(0)\,
\end{equation}
where now $R$ is the distance in Mink$_4$.
In our case, the $m_0=0$ contribution provides the usual large-distance $d=4$ potential. The sum for $k>0$ is replaced by an integral over the continuous spectrum; an explicit analysis of the suppression near the origin gives \cite{randall-sundrum2} $\int_0^\infty \dd m \ee^{-m R} m/L_5^2$. This correction behaves as $\sim L_5^{-2} R^{-3}$ at small distances, so it is suppressed for large distances. It is in this sense that gravity localizes in this model.

It is natural to ask whether this model has a realization in string theory. The most natural analogue is a vacuum solution with two D3-brane stacks, as pointed out in \cite{verlinde-RS2,chan-paul-verlinde}. Indeed near such a stack the metric is asymptotic to the interior of AdS space, as in (\ref{eq:rs2-metric}) for $r\to \pm \infty$. (If one insists that the additional five dimensions should have the same topology at all values of $r$, finding models similar to (\ref{eq:rs2-metric}) becomes harder \cite{kallosh-linde-RS2,behrndt-cvetic-RS2,ceresole-dallagata-kallosh-vanproeyen}.) The holographic dual of the RS2 model is a CFT with a cut-off coupled to weakly gauged gravity \cite{witten-RS2-talk,gubser-RS2}.

% subsection mink (end)

\subsection{de Sitter} % (fold)
\label{sub:ds}

With the same action (\ref{eq:rs2-action}), if instead of fine tuning the brane tension as $\lambda= 3/L_5$ we take $\lambda>3/L_5$, we have the solution \cite{karch-randall}
\begin{equation}\label{eq:kr-ds-metric}
	\dd s^2_5 = \ee^{2A} \dd s^2_{\mathrm{dS}_4} + \dd r^2 \, ,\qquad \ee^A = \ee^{A_0} \sinh\frac{c-|r|}{L_5}\,,
\end{equation}
for $r\in (-c,c)$ and with $c$ defined by $\lambda = \frac3{L_5} \coth(c/L_5)$. 
The integration constant $A_0$ and the cosmological constant $\Lambda_4$ of dS$_4$ are redundant; we can fix this ambiguity by imposing $A(r=0)=0$, and $\Lambda_4=(L_5 \sinh(c/L_5))^{-2}$.

This KR model can be analyzed similar to the RS2 in Section \ref{sub:mink} \cite{karch-randall}. The squared Planck mass $m_4^2=m_5^{3}\int_{-c}^c \dd r\ee^{3A}$ is again finite.
The coordinate $z$ defined by $\dd r= \ee^{A}\dd z$ covers all $\mathbb{R}$. The effective potential
$U(z)$ again has a peak with a negative delta at the origin, but now its asymptotics is $\lim_{z\to \pm \infty} U(z)= \frac94 \Lambda_4$. Because of this, the continuous spectrum only starts at $\frac94 \Lambda_4$. There is of course again the bound state $\psi_0  = \ee^{3A/2}$, coming from the constant eigenfunction $\xi_0= 1$ of the weighted Laplacian.

% subsection ds (end)

\subsection{Anti-de Sitter} % (fold)
\label{sub:ads}

We will now consider models with $\Lambda_4<0$. Unlike for $\Lambda_4\geqslant 0$, here the massless graviton is absent. The lightest spin-two field can still be much lighter than the rest of the KK masses, and at a certain intermediate range its potential can still behave as $1/r$, as we will see. We consider here the $D=5$ KR model, and we will discuss string theory embeddings in Section \ref{sec:st}. 

The AdS version of the KR model is again obtained from (\ref{eq:rs2-action}), with $\lambda<3/L_5$ and solution \cite{karch-randall}
\begin{equation}\label{eq:kr-ads-metric}
	\dd s^2_5 = \ee^{2A} \dd s^2_{\mathrm{AdS}_4} + \dd r^2 \, ,\qquad \ee^A = \ee^{A_0} \cosh\frac{c-|r|}{L_5}\,,
\end{equation}
and $\lambda = \frac3{L_5} \tanh(c/L_5)$. Again we impose $A(r=0)=0$, with $\Lambda_4=-(L_5 \cosh(c/L_5))^{-2}$.

Since now the warping function diverges at infinity, the naive Planck mass $m_5^{3}\int\ee^{3A}$ is infinite, and so the usual $\xi=$ constant eigenfunction is not in $L^2$ (and as a consequence we cannot use $\psi= \ee^{3A/2}$). The lowest eigenvalue corresponds to a different eigenfunction $\xi_0$, and the analogue of the Planck mass for this light spin-two field is given now by (\ref{eq:m0-cps}).

Nevertheless, a version of localization is still at play in this model. An explicit analysis reveals that the lowest eigenvalue, while not zero, is much smaller than the higher ones: as $\Lambda_4\to 0$, one gets \cite{miemiec,schwartz-KR}\footnote{Analytically, the spectrum is given by the zeros of the function in \cite[(2.1)]{miemiec}. While this condition is still quite complicated to analyze, it can be written as a power series in $k^2\Lambda_4 L_5^2$ using the expansion of the hypergeometric function $F(a,b;c;z)$ around $z=1$ given e.g.~in \cite[(15.8.10)]{NIST}.}\footnote{It can be checked explicitly that for $|\Lambda_4|\to 0$ the odd eigenvalues become degenerate with the even ones. However, the odd eigenvalues correspond to odd eigenfunctions that as such vanish at the location of the brane and do not contribute to the 4d physics. }
\begin{equation}\label{eq:mass-rs-ads}
	m^2_0 \sim \frac32 \Lambda_4^2 L_5^2 \, ,\qquad m_{{2k >0}}^2\sim k(k+3)|\Lambda_4| +\frac12 k^3 \Lambda_4^{2} L_5^2 + O(k^5 |\Lambda_4|^3 L_5^4)\,.
\end{equation}

As in previous cases, we can check these results using Cheeger constants $h_k$. In a situation where $m_4$ is infinite, one considers the smallest of them $h_0$, which is defined similarly to (\ref{eq:h1}) but with a weaker constraint on the volume:
\begin{equation}\label{eq:h0}
	h_0 := \inf_{B} \frac{\text{Vol}_A(\partial B)}{\text{Vol}_A(B)}\,,\qquad \text{Vol}_A(B)<\infty\,.
\end{equation}
(This is automatically zero when $\mathrm{Vol}_A(X)$ is finite, as one sees by taking $B=X$.) Unlike in the Minkowski case, taking $B$ to be semi-infinite leads to an infinite $\mathrm{Vol}_A(B)$. A better result is obtained by considering a symmetric interval $(-r,r)$. In the limit $\Lambda_4\to 0$, $c\gg L_5$. Introducing $H(r)=\frac12 (\cosh^2x+1)$ and changing variable to $x=(c-r)/L_5$, with a rough approximation we get that the minimization is achieved for $r\sim c$, $x\to 0$:
\begin{equation}
	L_5^{-2}h_0 = \sqrt{-\Lambda_4}\inf_{x} \frac{\cosh^3x}{H(c/L_5)-H(x)}\sim \sqrt{-\Lambda_4 } H(c/L_5)^{-1} \sim 2(- \Lambda_4)^{3/2} \,.
\end{equation}

Applying \cite[Th.~4.2]{deluca-deponti-mondino-t} we obtain $ (\Lambda_4 L_5^2)^3 < m_0^2 L_5^2 < \frac{21}{10} \sqrt{3} (\Lambda_4 L_5^2)^2$, in agreement with (\ref{eq:mass-rs-ads}).\footnote{To apply \cite[Th.~4.2]{deluca-deponti-mondino-t} we also need $K$, a lower bound on the weighted Ricci curvature. This can be readily obtained directly from the equations of motion for \eqref{eq:rs2-action} as $K = 3 \Lambda_4$. } While of course the explicit result was already available, the present computation is a valuable warm-up for the ten-dimensional case of the next subsection.

The hierarchy in (\ref{eq:mass-rs-ads}) already indicates that a form of localization appears in this model.\footnote{The hierarchy between $m_0$ and $m_k$ can also be obtained by estimating the ratio between $h_0$ and the higher Cheeger constants $h_k$, as we will see in detail in Sec.~\ref{sec:st} in a more involved setup where analytic results for the spectra are not available. For the present case we obtain $h_0/h_1\sim |\Lambda_4|$ for $\Lambda_4\to0$, which agrees with the explicit result.} A four-dimensional observer testing gravity at distances $R$ with $ |\Lambda_4|^{-1/2} \ll R \ll (|\Lambda_4|L_5)^{-1}$ would not realize that the graviton has in fact a non-zero mass $m_0$, and would also not feel the effect of the $m_{k>0}$. However, the lower end of this length range is still cosmological, so this in itself would not be very satisfactory. We will now see that actually gravity remains four-dimensional well below this scale, thanks to the further effect of wave-function suppression near the origin, similar to that in the RS2 model.

To see this, consider again the gravitational potential. Even in AdS, for $R\ll |\Lambda_4|^{-1/2}$ we can still use the expression (\ref{eq:massive-V}). In the limit $\Lambda_4\to 0$, we can approximate the sum over $k$ as an integral, and use the estimates $\psi_0^2(0)\sim 2 +O(\sqrt{|\Lambda_4|}L_5)$, $\psi_k^2(0)\sim |\Lambda_4| L_5^2(\frac14 + \frac45 k)$ \cite{schwartz-KR}. This gives\footnote{\label{foot:quickV}The sub-leading $\sqrt{|\Lambda_4|} L^2  R^{-2}$ can also be estimated by noticing that the masses $m_k$ up to $k\sim 1/\sqrt{|\Lambda_4|}r$ contribute $O(1)$ to the sum in (\ref{eq:massive-V}), and the ones above it very little.}
\begin{equation}\label{eq:V-KR}
	V\sim G M_1 M_2 \left(\frac1R + \frac{\sqrt{|\Lambda_4|} L_5^2}{4 R^2} + \frac{L_5^2}{R^3}\left(\frac45 - \frac18|\Lambda_4|L_5^2\right)\right)\,.
\end{equation}
As $\Lambda_4\to 0$, the $R^{-1}+ L_5^2 R^{-3}$ behavior of the RS2 model is recovered. As in that model, the $R^{-3}$ term is negligible for $R \gg L_5$. The new $R^{-2}$ term is negligible if $R \gg \sqrt{|\Lambda_4|}L_5^2$, which is eventually weaker in the $\Lambda_4\to 0$ limit. So this model still displays localization: gravity would behave in a four-dimensional fashion at macroscopic distances.

% subsubsection ads5 (end)

\subsection{The Schr\"odinger potential} % (fold)
\label{sub:pot}

In the five-dimensional RS2 and KR models we reviewed in this section, the point of view of the Schr\"odinger potential in (\ref{eq:schroedinger}) was useful in developing intuition about the model's properties. However, we will now argue that in higher dimensions it can also be misleading in some respects. 

While on a smooth compact space it is easy to show that the zero mode of the Laplace operator is necessarily constant, in presence of singularities (and on a non-compact space) this might not necessarily be obvious. In particular, the point of view of the Schr\"odinger potential might suggest otherwise.

As a toy model, consider the radial part of the usual flat space Laplacian in $\mathbb{R}^d$, $\Delta_\mathrm{rad} \xi := - z^{1-d} \partial_z (z^{d-1} \partial_z \xi)$. The map (\ref{eq:schroedinger}) relates its spectrum to 
\begin{equation}\label{eq:schr-x2}
	- \partial^2_z \psi + U \psi := {\cal H} \psi = \lambda \psi  \, ,\qquad U= \frac{(d-1)(d-3)}{4 z^2}\,.
\end{equation}
The potential $1/z^2$ is of Calogero type, but here we need to assume that $z>0$, so we put an infinite barrier for $z\leqslant 0$ as in \cite{essin-griffiths}. For $d\neq 2$, $U\geqslant 0$ and one does not expect any bound states. The local solutions to (\ref{eq:schr-x2}) for $\lambda=0$ are $\psi=z^{(d-1)/2}$ and $z^{(3-d)/2}$, none of which are normalizable. (For $d=1$ and $d=3$, $U=0$ and these two local solutions become constant and linear.) These correspond to $\xi=1$ and $z^{2-d}$. Both are harmonic outside the origin, but the second is in fact the Green's function: it solves  $\Delta_\mathrm{rad} \xi=\delta$, up to an overall constant. 

The $d=2$ case in (\ref{eq:schr-x2}) deserves a separate treatment. The potential is now attractive:
\begin{equation}\label{eq:1/4z2}
	U=-\frac1{4z^2}\,.
\end{equation}
It is a priori possible to have bound states, as was discussed for example in \cite{essin-griffiths,derezinski-richard}. Intriguingly, the coefficient $1/4$ is a `critical' case in this study. The $\lambda=0$ solutions are $\psi=\sqrt{z}$ and $\sqrt{z}\log z$; again none of them are normalizable, and map respectively to $\xi=1$  and to the Green's function $\log z$ in $\mathbb{R}^2$. 

There is a subtlety, however. Recall that a rigorous definition of the Hamiltonian ${\mathcal H}$ also needs the data of a domain $D({\mathcal H})$ on which it is self-adjoint. The adjoint is defined of course by $\langle {\mathcal H}\psi_1,  \psi_2 \rangle= \langle \psi_1, {\mathcal H}^\dagger\psi_2\rangle$ and ${\mathcal H}$ is called Hermitian if $\langle {\mathcal H}\psi_1, \psi_2 \rangle= \langle \psi_1, {\mathcal H}\psi_2\rangle$, self-adjoint if moreover $D({\mathcal H})= D({\mathcal H^\dagger})$. For a more rigorous   introduction we refer to Sec.\;\ref{sec:spectral}. If one considers for example $D({\mathcal H})$ as the space of functions with compact support, usually $D({\mathcal H})\neq D({\mathcal H^\dagger})$. It might be possible, however, to \emph{extend} the domain by adding functions to it, such that ${\mathcal H}$ becomes self-adjoint.

Potentials proportional to $1/z^2$ admit a one-parameter choice $D_\lambda$ of self-adjoint extensions, and (\ref{eq:1/4z2}) in particular admits a single bound state $\psi= \sqrt{z}K_0(\sqrt{-\lambda}z)$. However, this corresponds to $\xi = K_0(\sqrt{-\lambda}z)$, which again behaves as $\sim \log z$ near the origin; thus, it solves  $\Delta_\mathrm{rad} \xi= \lambda \xi + \delta$ rather than $\Delta_\mathrm{rad} \xi= \lambda \xi$.\footnote{One might think at this point that one can try to define a self-adjoint extension of the Laplacian by adding the Green's function to the domain, working on a space where the support of the delta has been removed. We will analyze this possibility in Sec.~\ref{sub:polar}.}

A similar discussion is also relevant in string theory near D$p$-brane singularities. Writing the ten-dimensional metric as $\dd s^2_{10}= \ee^{2A}(\dd s^2_{M_4} + \dd s^2_{X_6})$, in Einstein frame we locally have $\ee^{2A}\sim H^{(p-7)/8}$ and $\dd s^2_{X_6}\sim \dd x^2_{p+1-d} + H (\dd z^2 + z^2 \dd s^2_{S^{8-p}})$, with $H$ a harmonic function of the transverse coordinates and $\dd x^2_{p+1-d}$ representing the metric parallel to the D$p$. We then have $\Delta_\mathrm{A,rad} \xi \sim -H^{-1} z^{p-8} \partial_z (z^{8-p} \partial_z \psi) $. The local discussion for $\lambda=0$ eigenvalues is then identical to the one above in flat $\mathbb{R}^d$, with $d=9-p$.

A more concrete example was discussed in \cite{crampton-pope-stelle}. Here spacetime is $\mathrm{Mink}_4\times X_6$, with $X_6$ non-compact; an NS5-brane stack fills spacetime and is smeared along an $S^2\subset X_6$, so that its back-reaction is characterized by a harmonic function $H$ of the two remaining directions. Symmetry reduces the warped Laplacian to an operator in the radial direction $\rho$ of this transverse $\mathbb{R}^2$, 
	$- \Delta_{A,\mathrm{rad}} \xi\propto \frac1{H \sinh 4 \rho} \partial_\rho ( \sinh2 \rho \partial_\rho \xi).$
Locally around $\rho=0$ the situation is similar to that discussed around (\ref{eq:1/4z2}). The one-parameter self-adjoint extension discussed there might raise hopes that a non-trivial bound state might exist. However, the $\lambda=0$ wave-function is $\xi= \log\tanh \rho$; a limit near $\rho=0$ shows that again $\Delta_{A,\mathrm{rad}} \xi= \delta$ rather than  $\Delta_{A,\mathrm{rad}} \xi= 0$.\footnote{A more recent analysis \cite{erickson-leung-stelle} shows that indeed without this mode the model does not display localization.}

In summary, the Schr\"odinger point of view might suggest that non-trivial self-adjoint extensions might give rise to non-trivial solutions $\psi_0$. But in practice we have seen that such solutions always map to $\xi_0$ which are Green's functions rather than genuine eigenfunctions of the weighted Laplacian. In the next section we will prove rigorously that this is always the case: the only zero mode in $L^2$ of the weighted Laplacian is the constant, even on spaces that are singular and non-compact. We will also comment on the difficulties that arise by including the Green function in the domain of the weighted Laplacian.

% subsection pot (end)

% section rev (end)

\section{Constant harmonic functions} % (fold)
\label{sec:const}

The aim of this section is to rigorously investigate the question introduced in Sec. \ref{sub:pot}. Namely:\footnote{In this section we will consider a suitable generalization of the weighted Laplacian, appropriate for the general metric-measure setting; thus, we will no longer stress the weight and we will drop the subscript ${}_A$.}
\\

 Let $X$ be a space with a well defined notion of Laplacian $\Delta$ and let $f$ be a global harmonic function, i.e.~let us suppose $\Delta f=0$ on $X$. Under which assumptions on the space can we infer that $f$ is constant?
\\

In particular, we will concentrate on metric-measure structures, which arise naturally in a vast number of situations and allow to describe relevant physical geometries, even in presence of singularities. We will find  that $f$ is forced to being constant in (R)CD$(K,\infty)$ spaces, as well as in all the other singular spaces arising from the backreaction of D$p$-branes and O$p$-planes in gravity solution.

To obtain the result, we first define in Sec.~\ref{sec: PEMS} the Laplacian and study some of its properties, and then in Sec.~\ref{sec: metric irreducible} we prove that $\Delta f = 0$ can only be solved by a constant $f$ in a certain class of spaces with the \emph{$L^2$-Liouville property}, which we show to include the physical spaces we are interested in. 
More specifically, in Sec.~\ref{sec: PEMS} we introduce the notion of \emph{Cheeger energy} of a function $f$ (denoted by $\Ch(f)$), as a generalization of $\int_X |\nabla f|^2$ to non-smooth spaces. Prop.~\ref{prop:equiv} shows that $\Ch(f)=0$ is equivalent to $\Delta f=0$; this can be thought of as the appropriate generalization of the usual integration by parts argument on compact non-singular spaces.
We then use the fact that a space with D-brane and/or O-plane singularities can be decomposed as a smooth weighted Riemannian manifold plus a singular set $S$. Owing to smallness (in the sense of Hausdorff codimension) of the singular set in these physical spaces, using Prop.~\ref{prop:ManSing} we can show that $\Ch(f) = 0$ implies that $f = \text{const.~} \mm\text{-a.e.}$ Putting these results together, we obtain
that the space satisfies the $L^2$-Liouville property, that is any zero mode of the weighted Laplacian is constant.

\subsection{Metric measure spaces}\label{sec: PEMS}
We start by introducing a very general class of metric measure structures where a notion of Laplacian ($\Delta$) is defined, see Def.~\ref{def:Laplacian}. We then conclude by analyzing its properties, which we will use in the next section to characterize the solution of the equation $\Delta f = 0$. Most of the material for the preliminary section is taken from \cite{AGS2}, to which we refer for all the details. 

We will deal with \emph{metric measure spaces}: they are triples $(X,\di,\mm)$ where $(X,\di)$ is a complete and separable metric space, and $\mm$ is a nonnegative, Borel and $\sigma$-finite measure.

We consider the following additional assumption that connects the distance $\di$ and the measure $\mm$:
\begin{equation}\label{eq: finite compact}
\forall K\subset X \ \textrm{compact}\ \exists\, r>0 \ : \ \mm(\{x\in X:\di(x,K)\leqslant r\})<\infty\, .
\end{equation}
Notice that \eqref{eq: finite compact} is satisfied whenever the measure $\mm$ is finite on bounded sets, and it is crucial in showing the existence of sufficiently many integrable Lipschitz functions. More precisely, assuming \eqref{eq: finite compact}, it is possible to prove that the class of bounded, Lipschitz functions $f\in L^2(X,\mm)$ with $|Df|\in L^2(X,\mm)$ is dense in $L^2(X,\mm)$, where $|Df|$ is the slope of the function $f$ defined as 
$$|Df|(x):=\limsup_{y\to x}\frac{|f(y)-f(x)|}{\di(y,x)}\, ,\qquad  \textrm{if} \ x \ \textrm{is an accumulation point},$$
and $|Df|(x):=0$ if $x$ is isolated.

A relaxed gradient of a function $f\in L^2(X,\mm)$ is a function $G\in L^2(X,\mm)$ for which there exist Lipschitz functions $f_n\in L^2(X,\mm)$ such that:
\begin{itemize}
\item $f_n\to f$ in $L^2(X,\mm)$ and $|Df_n| \rightharpoonup \tilde{G}$ in $L^2(X,\mm)$;\footnote{$\rightharpoonup$ denotes weak convergence. The expression ``$\mm$-a.e.'' means ``almost everywhere with respect to $\mm$''.}
\item $\tilde{G}\leqslant G$ $\mm$-a.e.~in $X$.
\end{itemize}

 It is possible to prove that the set of all the relaxed gradients of a function $f\in L^2(X,\mm)$ is a closed and convex subset of $L^2(X,\mm)$. Thus, when non-empty, there exists an element of minimal $L^2$-norm which is called \emph{minimal relaxed gradient} and denoted by $|Df|_{\ast}$. It is minimal also in the $\mm$-a.e.~sense, meaning that for any relaxed gradient $G$ of $f$ it holds $|Df|_{\ast}\leqslant G$ $\mm$-a.e. 

Given a function $f\in L^2(X,\mm)$,  the \emph{Cheeger energy} $\Ch(f)$ is defined (see \cite{Ch99, AGS2}) as
$$\Ch(f):=\frac{1}{2}\int_X |Df|^2_{\ast}\,\dd\mm\,,$$
with the convention $\Ch(f)=+\infty$ if $f$ has no relaxed gradients. As usual, we denote by $ D(\Ch)$ the domain of the Cheeger energy, i.e.~the set of $f\in L^2(X,\mm)$ with $\Ch(f)<\infty$.
It is possible to check \cite{AGS2} that 
\begin{equation}\label{eq:defW12norm}
\|f\|_{W^{1,2}}:=\left( \|f\|_{L^{2}}^{2} + 2 \Ch(f) \right)^{1/2}
\end{equation}
 defines a complete norm on the vector space  $ D(\Ch)$. The corresponding Banach space is denoted by $W^{1,2}(X,\sfd,\mm)$. When $(X,\sfd,\mm)$ is a smooth weighted Riemannian manifold, i.e.~$X$ is a smooth complete manifold with metric $g$ endowed with the geodesic distance and a weighted volume measure $\mm:=\ee^f\mathsf{\dd vol}_g$, then $W^{1,2}(X,\sfd,\mm)$ is the standard Sobolev space (which is a Hilbert space). However in the generality of metric measure spaces, $W^{1,2}(X,\sfd,\mm)$ is a priori a Banach (non-Hilbert) space, for instance this is the case when  $(X,\sfd,\mm)$ is $\mathbb{R}^{n}$ endowed with a non-Euclidean norm and the $n$-dimensional Lebesgue measure.

The Cheeger energy is clearly nonnegative and $\Ch(c)=0$ for any constant function $c\in L^2(X,\mm)$. Moreover it is a $2$-homogenous, convex and lower semicontinuous functional in $L^2(X,\mm)$. In smooth spaces and for smooth functions, $\Ch(f)$ reduces to the classical Dirichlet energy $\frac{1}{2}\int |\nabla f|^2$, but we will see below examples where the above definition makes sense in far more general cases.

In the next proposition we collect some useful properties of the minimal relaxed gradient. We refer to \cite{AGS2} for a proof.
\begin{proposition}\label{prop: prop mrg}
Let $f\in L^2(X,\mm)$ be a function admitting relaxed gradients. Then:
\begin{enumerate}
\item For any set $N\subset \R$ of null $\mathcal{L}^1$-measure, $|Df|_{\ast}=0$ $\mm$-a.e.~on the set $f^{-1}(N)$. 
\item  For any $g$ with $\Ch(g)<\infty$ and for any $c\in \R$ it holds $|Df|_{\ast}=|Dg|_{\ast}$ on the set $\{f-g=c\}$.
\item Suppose $\phi:J\subset\R\to\R$ is Lipschitz, and $J$ is an interval containing the image of $f$ (with $\phi(0)=0$ if $\mm(X)=\infty)$. Then $\phi(f)\in D(\Ch)$ and $|D\phi(f)|_{\ast}\leqslant |\phi'(f)||Df|_{\ast}$ $\mm$-a.e. 
\end{enumerate}
\end{proposition}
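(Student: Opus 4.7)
The three statements are locality properties of the minimal relaxed gradient; I would prove a non-sharp chain rule first, then (1), then (2), and finally the sharp chain rule (3), with each step feeding the next through the lower semicontinuity and convexity of $\Ch$.

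The preliminary chain rule asserts that for Lipschitz $\phi$ with constant $L$ (and $\phi(0)=0$ when $\mm(X)=\infty$) one has $\phi(f)\in D(\Ch)$ with $|D\phi(f)|_{\ast}\leqslant L|Df|_{\ast}$. To prove it, by the definition of relaxed gradient combined with Mazur's lemma, pick Lipschitz $f_n\to f$ in $L^2(X,\mm)$ with $|Df_n|\to|Df|_{\ast}$ strongly in $L^2$. On Lipschitz compositions the pointwise slope bound $|D(\phi\circ f_n)|(x)\leqslant L|Df_n|(x)$ is immediate from the definition, and since $\phi\circ f_n\to\phi\circ f$ in $L^2$, lower semicontinuity of $\Ch$ yields the claim.

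For (1), given null $N\subset\R$ and $\varepsilon>0$, pick an open $U\supset N$ with $\mathcal{L}^1(U)<\varepsilon$ from a countable dense family so that $\mm(f^{-1}(\partial U))=0$. Set $\phi_U(t):=\int_0^t \mathbf{1}_{\R\setminus U}(s)\,ds$, a $1$-Lipschitz nondecreasing map that is locally constant on $U$; consequently the sharper bound $|D(\phi_U\circ g)|(x)\leqslant \mathbf{1}_{\R\setminus U}(g(x))|Dg|(x)$ holds pointwise for every Lipschitz $g$. Applying it to approximants $f_n\to f$ along a subsequence with $f_n\to f$ $\mm$-a.e.~(missing $\partial U$ on a set of full measure) and passing to the limit yields $|D\phi_U(f)|_{\ast}\leqslant \mathbf{1}_{\R\setminus U}(f)|Df|_{\ast}$ $\mm$-a.e. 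Shrinking $U\searrow N$ along a sequence with $\mathcal{L}^1(U)\to 0$ so that $\phi_U(f)\to f$ in $L^2$, a further round of lower semicontinuity forces $|Df|_{\ast}\leqslant \mathbf{1}_{\R\setminus N}(f)|Df|_{\ast}$ $\mm$-a.e., which is equivalent to (1).

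For (2), $h:=f-g\in D(\Ch)$ with $|Dh|_{\ast}\leqslant|Df|_{\ast}+|Dg|_{\ast}$ by sub-additivity; applying (1) to $h$ with $N=\{c\}$ gives $|Dh|_{\ast}=0$ on $\{f-g=c\}$, whence $|Df|_{\ast}\leqslant|Dg|_{\ast}+|Dh|_{\ast}=|Dg|_{\ast}$ there, and by symmetry equality holds. For the refined (3), approximate the Lipschitz $\phi$ by piecewise linear Lipschitz $\phi_n$ with finite break-point sets $B_n$, chosen so that the $\phi_n$ have uniformly bounded Lipschitz constants, $\phi_n\to\phi$ pointwise, and $\phi_n'\to\phi'$ $\mathcal{L}^1$-a.e. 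Wherever $\phi_n$ is affine of slope $a$ the function $\phi_n(f)-af$ is locally constant in $f$, so (2) gives $|D\phi_n(f)|_{\ast}=|a||Df|_{\ast}=|\phi_n'(f)||Df|_{\ast}$; the finite sets $B_n$ and the null set where $\phi_n'\not\to\phi'$ are handled by (1). Combined with $\phi_n(f)\to\phi(f)$ in $L^2$ and lower semicontinuity of $\Ch$, the inequality $|D\phi(f)|_{\ast}\leqslant|\phi'(f)||Df|_{\ast}$ follows $\mm$-a.e. The hard part is (1), where one must pass to the limit carefully in two successive approximation procedures (first in $n$, then in $\varepsilon$) and choose the sets $U$ from a good countable family so that the essential values of $f$ avoid $\partial U$.
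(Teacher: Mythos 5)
The paper does not give its own proof of this proposition; it cites \cite{AGS2}. Your blind reconstruction follows the standard route found there and in the literature on Sobolev spaces over metric measure spaces: a rough Lipschitz chain rule obtained from a strongly approximating Lipschitz sequence, then the null-level-set locality statement (1), then strong locality (2) as its corollary, then the sharp chain rule (3) by piecewise-linear approximation together with (1) and (2). The logical chain is correct.

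Two technical points deserve to be tightened, both of which you partially acknowledge. First, in the proof of (1), the passage to the limit in $|D(\phi_U\circ f_n)|\leqslant \mathbf{1}_{\R\setminus U}(f_n)|Df_n|$ requires more than just lower semicontinuity of $\Ch$: one should extract a weak $L^2$ limit $G$ of $|D(\phi_U\circ f_n)|$, note that it is a relaxed gradient of $\phi_U\circ f$, and use that $\mathbf{1}_{\R\setminus U}(f_n)|Df_n|\to \mathbf{1}_{\R\setminus U}(f)|Df|_{\ast}$ \emph{strongly} (possible only on the set of full measure where $f(x)\notin\partial U$), since weak limits only transmit $\mm$-a.e.\ order bounds against strong limits. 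Moreover, the phrase ``from a countable dense family'' is too vague to guarantee $\mm(f^{-1}(\partial U))=0$; a clean fix is to first reduce to compact $K\subset N$ by inner regularity of $f_{\#}\mm$ and take $U_\eps=\{\dist(\cdot,K)<\eps\}$, noting that $f_{\#}\mm$ can charge at most countably many of the spheres $\{\dist(\cdot,K)=\eps\}$. Second, in (3), the claim that one can pick piecewise-linear $\phi_n$ with \emph{pointwise} $\mathcal{L}^1$-a.e.\ convergence $\phi_n'\to\phi'$ (not just weak-$*$ convergence of derivatives) is true but not automatic from naive linear interpolation; the standard device is mollification followed by a diagonal argument, or a maximal-function bound on the interpolation error. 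With these two repairs the argument is complete and gives precisely the inequality stated in the proposition.
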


Now let us suppose \eqref{eq: finite compact}. As a consequence the set $D(\Ch)$ is dense in $L^2(X,\mm)$, and we can invoke the classical theory of gradient flows in Hilbert spaces to infer that for every $f\in L^2(X,\mm)$ there exists a unique locally Lipschitz curve $t\mapsto H_t(f)$ from $(0,\infty)$ to $L^2(X,\mm)$ such that 
$$\begin{cases}\frac{\mathrm{d}}{\mathrm{d} t}H_t(f)\in -\partial^{-}\Ch(H_t(f))\qquad \textrm{for a.e.}\ t \in (0,\infty),\\
H_t(f)\to f \qquad \textrm{as} \ t\to 0^{+}\, .
\end{cases}$$
Here $\partial^{-}\Ch\subset L^2(X,\mm)$ is the subdifferential of the functional $\Ch$, i.e.~given $f\in L^2(X,\mm)$ it holds $\ell\in\partial^{-}\Ch(f)$ if 
$$\int_X \ell(g-f)\dd\mm+\Ch(f)\leqslant \Ch(g) \qquad \textrm{for all} \ g\in L^2(X,\mm).$$

We refer to $H_t(f)$ as the \emph{heat flow} at time $t$ starting from $f$. Using the uniqueness of the curve $t\mapsto H_t(f)$, one can easily see that the heat flow satisfies the semigroup property $H_{t+s}=H_t\circ H_s$ for every $t,s>0$.

The heat flow has many regularizing effects. For instance, it is possible to prove that the right derivative $\frac{\mathrm{d}^{+}}{\mathrm{d} t}H_t(f)$ exists for any $t>0$ and it is equal to the element of minimal norm of $-\partial^{-}\Ch(H_t(f))$. This suggests to introduce the following: 
\begin{definition}\label{def:Laplacian}
	We write $f\in D(\Delta)$ if $f\in L^2(X,\mm)$ with $\partial^{-}\Ch(f)\neq \emptyset$; for $f\in D(\Delta)$ we denote by $\Delta f$ the element of minimal $L^2$-norm in $\partial^{-}\Ch(f)$ and we refer to it as the \emph{Laplacian} of $f$. 	
\end{definition}
Notice that we are assuming a natural integrability assumption on functions $f$ in the domain of the Laplacian $D(\Delta)$, namely by writing $\Delta f$ we are in particular assuming $f$ to be in $L^2(X,\mm)$. It is easy to check that the \emph{metric-measure} Laplacian that we have defined coincides with the weighted Laplacian whenever the underlying metric measure space is a smooth weighted Riemannian manifold.

An immediate consequence of the $2$-homogeneity of the Cheeger energy is that $H_t$ and $\Delta$ are $1$-homogeneous, i.e.~$H_t(c f)=c H_t(f)$ and $\Delta(c g)=c\Delta g$ for every $f\in L^2(X,\mm)$, $g\in D(\Ch)$ and $c\in\R$. However, $H_t$ and $\Delta$ are in general not additive, and thus not linear operators. Notice also that $H_t(c)\equiv c$ for every $t>0$ and every constant $c\in \R$ ($c=0$ if $\mm(X)=\infty$).

Regarding the Laplacian and the heat flow, still without assuming linearity, we have the following important properties (see {\cite[Prop.\;4.15 and Th.\;4.16]{AGS2}}.
\begin{proposition}\label{prop: prop heatflow} Let $(X,\di,\mm)$ be a metric measure space satisfying \eqref{eq: finite compact}. It holds:
\begin{enumerate}
\item For all $f\in D(\Delta)$ and $\phi:J\to \R$ Lipschitz, with $J$ a closed interval containing the image of $f$ (and $\phi(0)=0$ if $\mm(X)=\infty$), we have
\begin{equation}
\int_X \phi(f)\Delta f\, \dd\mm=\int_X \phi'(f)|Df|^2_{\ast}\, \dd\mm.
\end{equation}
\item Let $f\in L^2(X,\mm)$ and $e:\R\to [0,\infty]$ be a convex and lower semicontinuous function. Denoting by $E:L^2(X,\mm)\to [0,\infty]$ the functional defined by $E(f):=\int_X e(f)\,\dd\mm$, if $e'$ is locally Lipschitz in $\R$ with $E(f)<\infty$, then
\begin{equation}
E(H_tf)+\int_0^t\int_X e''(H_s f)|DH_s f|^2_{\ast}\,\dd\mm\,\dd s= E(f) \qquad \textrm{for every} \ t>0\, .
\end{equation} 
\end{enumerate}
\end{proposition}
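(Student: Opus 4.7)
The plan is to derive both identities from the subdifferential characterization of $\Delta$ together with the chain rule for minimal relaxed gradients from Proposition~\ref{prop: prop mrg}(3). Informally, (1) is an "integration by parts" identity, while (2) follows from (1) by differentiating $t\mapsto E(H_tf)$ along the gradient flow.

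For part (1), I would argue by testing the subdifferential inequality against competitors of the form $g_\eps := f+\eps\,\phi(f)$. Since $\phi$ is Lipschitz, for all sufficiently small $|\eps|$ the map $\psi_\eps(t):=t+\eps\phi(t)$ is bi-Lipschitz and preserves the condition $\psi_\eps(0)=0$ needed when $\mm(X)=\infty$; in particular $g_\eps\in L^2(X,\mm)\cap D(\Ch)$. The chain rule in Proposition~\ref{prop: prop mrg}(3) yields $|Dg_\eps|_\ast = |1+\eps\phi'(f)|\,|Df|_\ast$ $\mm$-a.e., where $\phi'(f)$ is well-defined outside a set where $|Df|_\ast$ vanishes (by part (1) of Proposition~\ref{prop: prop mrg}, the slope vanishes on the preimage of the $\mathcal{L}^1$-null set where $\phi$ is not differentiable). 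Squaring and integrating,
\begin{equation*}
2\Ch(g_\eps)=\int_X (1+\eps\phi'(f))^{2}\,|Df|_\ast^{2}\,\dd\mm = 2\Ch(f)+2\eps\!\int_X \phi'(f)|Df|_\ast^{2}\,\dd\mm+O(\eps^{2}).
\end{equation*}
Testing $\Delta f\in\partial^{-}\Ch(f)$ against $g_{+\eps}$ and $g_{-\eps}$, dividing by $\eps>0$ and letting $\eps\downarrow 0$ then yields a two-sided inequality that forces the equality in (1).

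For part (2), the strategy is to show that $t\mapsto E(H_tf)$ is locally absolutely continuous on $(0,\infty)$ with derivative $-\int e''(H_tf)|DH_tf|_\ast^{2}\,\dd\mm$, and then integrate. Because the heat flow is locally Lipschitz from $(0,\infty)$ into $L^{2}(X,\mm)$ with right derivative $\tfrac{\dd^{+}}{\dd t}H_tf=-\Delta H_tf$, and since $e'$ is locally Lipschitz with $H_tf$ locally bounded in $L^{2}$, the composition $t\mapsto e'(H_tf)$ has sufficient regularity to apply the chain rule in $L^{2}$: for a.e.~$t>0$,
\begin{equation*}
\frac{\dd}{\dd t}E(H_tf)=\int_X e'(H_tf)\,\frac{\dd}{\dd t}H_tf\,\dd\mm=-\int_X e'(H_tf)\,\Delta H_tf\,\dd\mm.
\end{equation*}
Applying part (1) with $\phi=e'$ (which is Lipschitz on any interval containing the essential image of $H_tf$, and satisfies $\phi(0)=e'(0)=0$ when required, up to the standard normalization for convex $e$) converts the right-hand side into $-\int_X e''(H_tf)|DH_tf|_\ast^{2}\,\dd\mm$. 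Integrating in time from $0$ to $t$ and using that $E(H_sf)\to E(f)$ as $s\downarrow 0$ (by lower semicontinuity of $E$ together with the uniform bound $E(H_sf)\leqslant E(f)$ coming from the monotonicity of the flow along convex functionals) gives the claimed dissipation identity.

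The main obstacle is in part (1): justifying the expansion of $\Ch(g_\eps)$ requires the chain rule to be \emph{sharp} (i.e.~an equality), not just the inequality stated in Proposition~\ref{prop: prop mrg}(3). This sharpness is precisely what allows the $O(\eps^{2})$ remainder to be controlled uniformly and the two-sided subdifferential test to collapse to the stated integration by parts. The sharpness holds whenever $\psi_\eps$ is bi-Lipschitz: applying Proposition~\ref{prop: prop mrg}(3) also to $\psi_\eps^{-1}$ gives the reverse bound $|Df|_\ast \leqslant |(\psi_\eps^{-1})'(g_\eps)|\,|Dg_\eps|_\ast$, and the two combine into an equality $\mm$-a.e. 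A secondary subtlety, in part (2), is the legitimacy of differentiating under the integral and handling the endpoint $t=0$; these are standard but require the convexity and lower semicontinuity of $e$ (to pass $E$ through weak limits) together with the strong $L^{2}$-convergence of $H_sf\to f$.
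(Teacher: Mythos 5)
The paper does not give its own proof of this proposition --- it delegates to the cited reference (AGS, Prop.~4.15 and Th.~4.16) --- so the relevant question is whether your argument is sound, and whether it matches the standard route. For part (1), testing the subdifferential inequality $\int_X\Delta f\,(g-f)\,\dd\mm+\Ch(f)\leqslant\Ch(g)$ against $g_\eps=f+\eps\phi(f)$ is exactly the standard argument and is correct. One remark: the ``sharpness'' of the chain rule that you single out as the main obstacle is in fact not needed. Since $\Ch(g_\eps)$ sits on the \emph{right-hand side} of the subdifferential inequality, you only ever need an \emph{upper} bound for it, and the one-sided chain rule $|Dg_\eps|_\ast\leqslant(1+\eps\phi'(f))\,|Df|_\ast$ (valid for $|\eps|$ small enough that $1+\eps\phi'(f)>0$ $\mm$-a.e.) already gives
\begin{equation*}
\eps\int_X\Delta f\,\phi(f)\,\dd\mm\;\leqslant\;\eps\int_X\phi'(f)|Df|_\ast^{2}\,\dd\mm+\frac{\eps^{2}}{2}\int_X(\phi'(f))^{2}|Df|_\ast^{2}\,\dd\mm\,,
\end{equation*}
and sending $\eps\to0^{\pm}$ produces both inequalities. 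Your bi-Lipschitz-inverse argument for equality of the chain rule is correct, but it is doing more work than the proof needs.

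For part (2), the strategy (differentiate $t\mapsto E(H_tf)$, apply part (1) with $\phi=e'$, integrate) is the right one, but two steps are genuinely incomplete as stated. First, part (1) requires $\phi$ to be Lipschitz on a closed interval containing the essential image of the function to which it is applied; since $H_tf$ is only in $L^2$ and $e'$ is only \emph{locally} Lipschitz, you cannot invoke (1) with $\phi=e'$ directly unless $H_tf\in L^\infty$. The standard repair is a truncation step: prove the identity first for $f\in L^\infty$ (then $\|H_tf\|_\infty\leqslant\|f\|_\infty$ by the order-preserving/contraction property of the flow, so $e'$ is effectively globally Lipschitz on the relevant range), and pass to general $f\in L^2$ afterwards. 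Second, to close the argument at $t=0$ you invoke $E(H_sf)\leqslant E(f)$ as ``monotonicity of the flow along convex functionals''. But this inequality is exactly a consequence of the very identity you are proving (the dissipation term is nonnegative because $e''\geqslant0$), so as written the endpoint argument is circular. One must either establish the Markov property of the heat flow independently, or close the limit $s\downarrow0$ by a different route (e.g.~via the same $L^\infty$-truncation plus dominated convergence) that does not presuppose the conclusion.
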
 

To summarize, we have given a general definition of a Laplacian in Def.~\ref{def:Laplacian} and we have shown some of its properties. However, for physical application we often also want the Laplacian to be a linear operator. This constraint excludes Finslerian structures\footnote{Recall that a Finsler space is a smooth manifold endowed with a distance induced by the length functional $\int \dd \lambda F(x,\partial_{\lambda} x)$, with $F(x, \partial_{\lambda} x)$ a norm in the velocity $\partial_{\lambda} x$, depending smoothly on the base point $x$. If for all $x$, the norm $F(x, \cdot)$ satisfies the parallelogram identity and thus it comes from a scalar product, one is back to the classical framework of Riemannian geometry. Hence, this is  a natural generalization of Riemannian geometry,  see for example \cite{chern-finsler} for a quick introduction.
Even though Riemannian structures are more common in physics, the language of Finsler geometry
is also useful in some contexts. For example, geodesics in stationary space-times are described by geodesics of a Finsler structure on appropriately  defined spatial slices \cite{gibbons-herdeiro-warnick-werner, gibbons-jacobi,chanda-gibbons-guha-maraner-werner}. See also \cite{lammerzahl-perlick} for a review of more speculative applications to physics.} and is achieved through the following property: the space $(X,\di,\mm)$ is called \emph{infinitesimally Hilbertian} if for any $f,g\in D(\Ch)$ it holds
\begin{equation}\label{eq: def inf hilb}
2\Ch(f)+2\Ch(g)=\Ch(f+g)+\Ch(f-g).
\end{equation}
This assumption ensures that the heat flow and the Laplacian are linear. In particular, $\Ch$ becomes a strongly local, symmetric Dirichlet form on $L^2(X,\mm)$, $H_t$ is the associated Markov semigroup and $\Delta$ its infinitesimal generator \cite{AGS1, G11}.
Recall that Dirichlet forms are particular quadratic forms that provide a way to generalize the Laplacian, a classic example being ${\cal E}(u):=\frac{1}{2}\int_M |\nabla u|^2\mathrm{dvol}_g$ on a Riemannian manifold $(M,g)$. See e.g.~\cite{fukushima-book} as a general reference on this topic.

\subsection{$L^2$-Liouville property of metric measure spaces}\label{sec: metric irreducible}
Having introduced the general setting we are working on and a suitable notion of Laplacian, we now characterize in Prop.~\ref{prop:equiv} the spaces where $\Delta f = 0$ implies that $f$ is a constant, saying that they satisfy the \emph{$L^2$-Liouville property}. In particular, one of the characterizations will show that $\Delta f = 0$ is equivalent to $\Ch(f)=0$. We then take advantage of the expression of the Cheeger energy outside the singular set to conclude (Prop.~\ref{prop:ManSing} and Rem.~\ref{rem: OpDp} below) that spaces with D-brane and O-planes backreactions satisfy this property. In doing this, it will be clear the advantage of working in our framework. We will also show that the $L^2$-Liouville property is satisfied in many other relevant classes of metric measure spaces.

Borrowing the terminology from the celebrated Euclidean result about the constancy of bounded harmonic functions, we start by introducing the following definition.
\begin{definition}\label{def:L2Liouv}
Let $(X,\di,\mm)$ be a metric measure space satisfying \eqref{eq: finite compact}. We say that $(X,\di,\mm)$ satisfies the $L^2$-Liouville property if for any function $f\in D(\Delta)$ with $\Delta f=0$ there exists $c\in \R$ such that $f=c$ $\mm$-a.e., i.e.~$\mm(\{x\in X \ : \ f(x)\neq c\})=0$. 
\end{definition}
We remark that we will not assume the infinitesimally Hilbertianity of $(X,\di,\mm)$.  On the one hand this allows higher generality in the spaces (for instance, Finsler manifolds enter the framework); on the other hand, the treatment is slightly more delicate as $\Delta$ is in general not linear and the standard spectral theory is not at disposal.  

Even if the definition \ref{def:L2Liouv} makes sense without imposing any condition on the metric measure space (beside \eqref{eq: finite compact}), we are essentially interested in situations where the support of the measure $\mm$ is a connected subset of $X$. In this case, our definition should be compared with the notion of \emph{irreducibility} of a Dirichlet form. Recall that a Dirichlet form is irreducible if the only invariant sets of the associated semigroup are negligible or co-negligible (i.e.~they are of measure zero or the complement has measure zero), where an invariant set $A\subset X$ is a measurable set such that $H_t(1_Af)=1_AH_tf$ $\mm$-a.e. for every $f\in L^2(X,\mm)$ and $t>0$ (see e.g.~\cite{fukushima-book}).

The next proposition is inspired by \cite[Proposition 2.3]{akr-jfa}. Characterization $(vi)$ is probably known to experts, at least for infinitesimally Hilbertian spaces where the Cheeger energy defines a Dirichlet form, but we remark that we did not find it explicitly stated in the literature.
\begin{proposition}\label{prop:equiv}
Let $(X,\di,\mm)$ be a metric measure space and assume \eqref{eq: finite compact}. The following are equivalent:
\begin{enumerate}[(i)]
\item $(X,\di,\mm)$ satisfies the $L^2$-Liouville property.
\item For any $f\in L^2(X,\mm)$ with $\Ch(f)=0$, there exists $c\in \R$ such that $f=c$ $\mm$-a.e. 
\item If $f\in L^2(X,\mm)$ admits minimal relaxed gradient that is equal $\mm$-a.e.~to the constant function $0$, then there exists $c\in \R$ such that $f=c$ $\mm$-a.e. 
\item For any $f\in L^2(X,\mm)$ there exists $c\in \R$ such that $\lim_{t\to \infty} H_tf=c$ in $L^2(X,\mm).$ 
\item If $f\in L^2(X,\mm)$ is such that $H_t f=f$ $\mm$-a.e.~for every $t>0$, then there exists $c\in \R$ such that $f= c$ $\mm$-a.e. 
\item If $f\in L^2(X,\mm)$ is such that $H_{t_0} f=f$ $\mm$-a.e.~for a certain $t_0>0$, then there exists $c\in \R$ such that $f=c$ $\mm$-a.e. 
\end{enumerate}
\end{proposition}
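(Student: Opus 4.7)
The plan is to prove the six conditions equivalent by closing a cycle of implications, using three main ingredients: the subdifferential characterization of $\Delta$ from Def.~\ref{def:Laplacian}, the energy-dissipation identity from Prop.~\ref{prop: prop heatflow}(2), and the $L^{2}$-contractivity of the heat flow (inherited from convexity of $\Ch$ and uniqueness of the gradient flow). A convenient cycle is $(i)\Leftrightarrow(ii)\Leftrightarrow(iii)$, then $(ii)\Leftrightarrow(v)\Leftrightarrow(vi)$, and finally $(ii)\Rightarrow(iv)\Rightarrow(v)$.

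\textbf{Easy equivalences.} The equivalence $(ii)\Leftrightarrow(iii)$ is immediate from $\Ch(f)=\tfrac{1}{2}\int_{X}|Df|^{2}_{\ast}\,\dd\mm$. For $(i)\Leftrightarrow(ii)$: by Def.~\ref{def:Laplacian}, $\Delta f=0$ iff $0\in\partial^{-}\Ch(f)$; the defining inequality of the subdifferential turns this into $\Ch(f)\leqslant\Ch(g)$ for all $g\in L^{2}(X,\mm)$, i.e.\ $f$ is a global minimizer of the nonnegative functional $\Ch$. Since $0\in L^{2}(X,\mm)$ with $\Ch(0)=0$, this is the same as $\Ch(f)=0$. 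For $(v)\Rightarrow(ii)$: if $\Ch(f)=0$ then $f$ minimizes $\Ch$, so $0\in\partial^{-}\Ch(f)$ and the constant curve $t\mapsto f$ solves the gradient-flow inclusion; uniqueness gives $H_{t}f\equiv f$, and $(v)$ applies. Conversely $(ii)\Rightarrow(v)$: inserting $H_{t}f=f$ into the dissipation identity with $e(x)=\tfrac{1}{2}x^{2}$ yields $\int_{0}^{t}\Ch(H_{s}f)\,\dd s=0$, and monotonicity of $s\mapsto\Ch(H_{s}f)$ along the flow together with lower semicontinuity of $\Ch$ as $s\to 0^{+}$ forces $\Ch(f)=0$. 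The implication $(vi)\Rightarrow(v)$ is trivial, and for $(v)\Rightarrow(vi)$ the semigroup law $H_{t+t_{0}}=H_{t}\circ H_{t_{0}}$ makes $t\mapsto\|H_{t}f\|_{L^{2}}$ periodic of period $t_{0}$; combined with its non-increasingness from $L^{2}$-contractivity, the map must be constant on $[0,t_{0}]$, and the dissipation identity again forces $\Ch(f)=0$, from which $(ii)$ concludes that $f$ is constant.

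\textbf{Main obstacle.} The delicate step is $(ii)\Rightarrow(iv)$: strong $L^{2}$-convergence of $H_{t}f$ to a constant under the $L^{2}$-Liouville property. The dissipation identity gives $\int_{0}^{\infty}\Ch(H_{s}f)\,\dd s<\infty$, and monotonicity of $t\mapsto\Ch(H_{t}f)$ yields $\Ch(H_{t}f)\to 0$. Boundedness of $\{H_{t}f\}_{t}$ in $L^{2}$, together with lower semicontinuity of $\Ch$, implies that every weak subsequential limit $f_{\infty}$ satisfies $\Ch(f_{\infty})=0$ and is therefore constant by $(ii)$. To upgrade from subsequential weak to full strong convergence, I would exploit $L^{2}$-contractivity: for every constant $c\in L^{2}(X,\mm)$ the map $t\mapsto\|H_{t}f-c\|_{L^{2}}$ is non-increasing and hence converges; expanding this in the Hilbert inner product shows that $\lim_{t}\langle H_{t}f,c\rangle$ exists for every such $c$, which identifies a unique weak limit, and the combination of this with convergence of $\|H_{t}f\|_{L^{2}}$ (from the dissipation identity) and the Radon--Riesz property of Hilbert spaces gives strong convergence. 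Finally $(iv)\Rightarrow(v)$ is immediate: if $H_{t}f\equiv f$ then $f=\lim_{t\to\infty}H_{t}f=c$ is a constant, closing the cycle.
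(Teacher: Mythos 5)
Your cycle of implications is sound and the easy equivalences $(i)\Leftrightarrow(ii)\Leftrightarrow(iii)$, $(ii)\Leftrightarrow(v)$, $(v)\Leftrightarrow(vi)$, and $(iv)\Rightarrow(v)$ are essentially correct; the periodicity argument for $(v)\Rightarrow(vi)$ is a nice variant of the paper's direct $(iii)\Rightarrow(vi)$, both ultimately exploiting the dissipation identity from Prop.~\ref{prop: prop heatflow}.

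The gap is exactly where you flagged it: $(ii)\Rightarrow(iv)$. Your argument establishes that $\Ch(H_t f)\to 0$, that every weak subsequential limit is a constant, and that $\lim_t\langle H_t f,c\rangle$ exists for constant $c\in L^2$, which (together with the fact that all subsequential limits are constants) pins down a unique weak limit $c_\infty$ and gives $H_t f\rightharpoonup c_\infty$. But Radon--Riesz requires $\|H_t f\|_{L^2}\to\|c_\infty\|_{L^2}$, and you only know that $\|H_t f\|_{L^2}$ converges to some $\alpha$; lower semicontinuity of the norm yields $\|c_\infty\|\leqslant\alpha$, not equality. There is no elementary Hilbert-space argument to close this because $H_t$ is in general \emph{nonlinear} (Prop.~\ref{prop:equiv} does not assume infinitesimal Hilbertianity), so you cannot write $H_t f - c_\infty = H_t(f-c_\infty)$ and run the dissipation identity on the difference. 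In fact, gradient flows of proper, l.s.c., convex functionals on Hilbert spaces need \emph{not} converge strongly (Baillon-type counterexamples), so some extra structural input is mandatory. The paper supplies it by invoking Bruck's asymptotic-convergence theorem for gradient flows of \emph{even} convex functionals, and the evenness of $\Ch$ is exactly what the $2$-homogeneity guarantees. Your proposal never uses evenness/homogeneity, which is the telltale sign that the strong-convergence step is not actually proved. To repair the proof, replace the Radon--Riesz paragraph with an appeal to Bruck's theorem (citing that $\Ch$ is proper, l.s.c., convex, with dense domain, and even), exactly as the paper does.

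(As a minor remark, once $(ii)\Leftrightarrow(v)$ is established directly, the separate chain $(ii)\Rightarrow(iv)\Rightarrow(v)$ is only needed to bring $(iv)$ into the equivalence class; proving $(ii)\Rightarrow(iv)$ and the trivial $(iv)\Rightarrow(v)$ suffices, and your direct $(ii)\Leftrightarrow(v)$ becomes redundant, though harmless.)
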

\begin{proof}
Notice that a function $f\in L^2(X,\mm)$ that satisfies the assumptions in $(ii)$ or $(iii)$ is in particular in the domain of the Laplacian with 
\begin{align*}
\Delta f=0 &\Longleftrightarrow 0\in \partial^{-}\Ch(f)\Longleftrightarrow \Ch(f)\leqslant \Ch(g) \ \forall g\in L^2(X,\mm)\\
&\Longleftrightarrow \Ch(f)=0 \Longleftrightarrow |Df|_{\ast}=0 \ \ \mm\text{-a.e.} 
\end{align*}
This proves the equivalence between $(i)$, $(ii)$ or $(iii)$. 

$(ii)\implies (iv)$: Since $\Ch$ is proper, lower semicontinuous, convex, with dense domain and $2$-homogeneous (thus even), and $H_t$ is defined as its gradient flow, we are in position to apply \cite[Theorem 5]{bruck-jfa} to infer that the strong $\lim_{t\to \infty}H_tf$ exists and is a minimum point of $\Ch$, and thus it is a function $f\in L^2(X,\mm)$ such that $\Ch(f)=0$. Using $(ii)$ the conclusion follows. 

$(iv)\implies (v)$: Let $f\in L^2(X,\mm)$ be such that $H_t f=f$ $\mm$-a.e.~for every $t>0$. Thus the strong $\lim_{t\to \infty}H_tf$ exists and it is equal to $f$. By $(iv)$ it follows $f=c$ $\mm$-a.e.~for some constant $c\in \R$.

$(v)\implies (ii)$: Let us suppose that $(v)$ holds and let $f\in L^2(X,\mm)$ be such that $\Ch(f)=0$. Then the curve $t\mapsto f$ from $(0,\infty)$ to $L^2(X,\mm)$ is Lipschitz continuous and a gradient flow of the Cheeger energy starting at $f$. By uniqueness we must have $H_t(f)=f$ $\mm$-a.e.~and $(ii)$ follows. 

It remains to show that $(vi)$ is equivalent to the previous points. It is obvious that $(vi)$ implies $(v)$. We now show that $(iii)$ implies $(vi)$.  Let $t_0>0$ and $f\in L^2(X,\mm)$ be such that $H_{t_0}f=f$. By applying Proposition \ref{prop: prop heatflow} with $e(x)=x^2$ we can infer that
$$E(H_{t_0}f)+2\int_0^{t_0}\int_X |DH_s f|^2_{\ast}\,\dd\mm\,\dd s= E(f)\, ,$$
and since $H_{t_0}f=f$ we have $\int_0^{t_0}\int_X |DH_s f|^2_{\ast}\,\dd\mm\,\dd s=0$. Thus for almost every $s\in [0,t_0]$ it holds $|DH_s f|_{\ast}=0$ $\mm$-a.e.~and using $(iii)$ it must be that $H_s f$ is $\mm$-a.e.~equal to a constant $c(s)$ for almost every $s\in [0,t_0]$. We can thus consider a sequence of time $s_n\to 0$ such that $H_{s_n}(f)=c(s_n)\to f$ where the convergence is $\mm$-a.e., and thus $f\equiv c$  for a certain constant $c\in \R$.
\end{proof}
We refer to \cite{akr-jfa} and \cite{Kaj17} for some other similar equivalent characterizations, at least in the context of Dirichlet forms, where the connection with the notion of irreducibility that we have recalled above is also discussed.

As we will see in the next examples, brought from \cite[Remark 4.12]{AGS2}, there exist spaces that do not satisfy the $L^2$-Liouville property.
\begin{example}
Consider the interval $X:=[0,1]\subset \R$. We endow it with the Euclidean distance $\di:=|\cdot|$ and a finite, Borel measure $\mm$ concentrated on $\mathbb{Q} \cap (0,1)$, i.e.~$\mm(X\setminus\{\mathbb{Q} \cap (0,1)\})=0$. It is clear that $(X,\di,\mm)$ is a metric measure space satisfying \eqref{eq: finite compact}. 

For any $n\in \N$, we consider an open set $A_n$ with $\mathcal{L}^1(A_n)\leqslant \frac{1}{n}$ and $\mathbb{Q} \cap (0,1)\subset A_n$, where $\mathcal{L}^1$ is the $1$-dimensional Lebesgue measure. To construct such a set $A_n$, one can simply consider an enumeration of the set $\mathbb{Q} \cap (0,1)=:\{e_k\}_{k\in \N}$ and define
$$A_n:=\bigcup_{k\in\N} B\left(e_k,\frac{1}{n2^{k}}\right),$$
where $B(e,r)$ is the open ball of center $e$ and radius $r$.
We then define the function $j_n:X\rightarrow \R$ as $j_n(x):=\mathcal{L}^1([0,x]\cap A_n^c)$, and from its expression we infer that $j_n$ is $1$-Lipschitz and $j_n(x)\rightarrow x$ strongly in $L^2(X,\mm)$. We consider now an $L$-Lipschitz function $f$ defined on $X$ and we set $f_n:=f\circ j_n$. Since $j_n$ is $1$-Lipschitz we have that $f_n$ is $L$-Lipschitz and converges strongly to $f$ in $L^2(X,\mm)$. Moreover, for every $n\in \N$
$$\int_{[0,1]}|Df_n|^2\,\dd\mm=0\,.$$
This last equality is a consequence of the fact that $\mm$ is concentrated on $\mathbb{Q} \cap (0,1)$ and $f_n$ is locally constant on $A_n$ (thus $|Df_n|(x)=0$ for every $x\in A_n$). It follows, by definition of the Cheeger energy, that $\Ch(f)=0$. To construct an example of space without the $L^2$-Liouville property it is thus sufficient to consider a measure $\mm$ such that there exists a Lipschitz function defined on $(X,\di)$ which is not equal to a constant $\mm$-a.e. For instance, one can fix an enumeration $\{e_k\}_{k\in \N}$ of the set $\mathbb{Q}\cap (0,1)$ and consider the Borel measure $\mm$ concentrated on $\mathbb{Q} \cap (0,1)$ and such that $\mm(\{e_k\})=1/2^k$, with the Lipschitz function $f(x)=x$. Notice that in this situation $\mathsf{supp}(\mm)=[0,1]$, by the density of $\mathbb{Q} \cap (0,1)$ in $[0,1]$. 

\end{example}

\begin{example}[Snowflake construction]
Let $(X,\di)$ be a complete metric space, and let $\alpha\in (0,1)$. Set $\di_{\alpha}:=\di^{\alpha}$ and consider the couple $(X,\di_{\alpha})$, which is still a complete metric space with the same induced topology of $(X,\di)$. (When $X=[0,1]$, $\di$ is the usual Euclidean distance, and $\alpha\in(0,1)$, there is a Lipschitz embedding of $(X,\di^\alpha)$ into a fractal, and in particular for $\alpha=\log 2/\log 3$ this fractal is the classic Koch snowflake \cite{assouad}.)
It is easy to show that every $\di_{\alpha}$-absolutely continuous curve is constant and thus, using the characterization of Sobolev class via test plans, it follows that for every Borel measure $\mm$ on $(X,\di)$ and every $f\in L^2(X,\mm)$ it holds $\Ch(f)=0$ (see e.g.~\cite[Exercise 2.1.14]{gigli-pasqualetto-book}). Thus, if there exists a measure $\mm$ and functions $f\in L^2(X,\mm)$ that are not $\mm$-a.e.~equal to a constant, one can use this construction to produce examples of spaces without the $L^2$-Liouville property. In particular, every complete Riemannian manifold endowed with the power $\alpha\in(0,1)$ of the geodesic distance and the volume measure does not satisfy the $L^2$-Liouville property. 
\end{example}

As we will see in the next proposition, the class of spaces satisfying the $L^2$-Liouville property is sufficiently large to contain all the spaces with a synthetic lower bound on the Ricci curvature and no upper bound on the dimension, the $\mathsf{CD}(K,\infty)$ spaces for short.\footnote{\label{foot:CD}Roughly speaking, these are spaces with certain singularities, on which a generalization of a lower bound on the Ricci curvature still makes sense. We refer to our earlier \cite{deluca-deponti-mondino-t,deluca-deponti-mondino-t-entropy} for informal discussions of these spaces, and to \cite{LoVi,St} for more mathematical details.}
\begin{proposition}\label{prop:CDIrr}
Let $(X,\di,\mm)$ be a $\mathsf{CD}(K,\infty)$ space for some $K\in \R$. Then $(X,\di,\mm)$ satisfies the $L^2$-Liouville property.
\end{proposition}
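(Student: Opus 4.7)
The plan is to verify characterization $(ii)$ of Prop.~\ref{prop:equiv}: any $f \in L^2(X,\mm)$ with $\Ch(f) = 0$ must agree $\mm$-a.e.\ with a constant.

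First, I would reduce to bounded $f$. Applying Prop.~\ref{prop: prop mrg}(3) with the truncations $\phi_N(t) = (t\wedge N)\vee(-N)$ (shifted so that $\phi_N(0)=0$ in the infinite-mass case) one obtains $|D\phi_N(f)|_* \leqslant |\phi_N'(f)|\,|Df|_* = 0$ $\mm$-a.e., hence $\Ch(\phi_N(f))=0$. If each $\phi_N(f)$ is $\mm$-a.e.\ equal to a constant $c_N$, then sending $N\to\infty$ recovers that $f$ is $\mm$-a.e.\ constant as well.

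Second, for bounded $f$ with $|Df|_*=0$, I would invoke the defining structural feature of $\CD(K,\infty)$: any two absolutely continuous probability measures $\mu_0,\mu_1$ with finite entropy are joined by a $W_2$-geodesic along which $\Ent_\mm$ is $K$-convex, and, as a consequence of this $K$-convexity, all intermediate measures $\mu_s = \rho_s\mm$ remain absolutely continuous with densities uniformly controlled by those of the endpoints. Such a geodesic can be lifted to a \emph{test plan} $\ggamma$ (in the Ambrosio--Gigli--Savar\'e sense) with prescribed marginals, finite $2$-action and bounded compression. The fundamental theorem of calculus for Sobolev functions along curves then gives
\begin{equation*}
|f(\gamma_1)-f(\gamma_0)| \leqslant \int_0^1 |Df|_*(\gamma_s)\,|\dot\gamma_s|\,\dd s \quad \text{for $\ggamma$-a.e.\ } \gamma.
\end{equation*}
Since $|Df|_*=0$ $\mm$-a.e.\ and $\ggamma$ has bounded compression, the right-hand side vanishes $\ggamma$-a.e., so $f(\gamma_0)=f(\gamma_1)$ along $\ggamma$-a.e.\ curve.

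Finally, by choosing the endpoint marginals $\mu_0,\mu_1$ to be heat-regularized normalizations of $\mm|_A/\mm(A)$ and $\mm|_B/\mm(B)$ for arbitrary Borel sets $A,B\subset\supp(\mm)$ of finite positive measure, the previous identity propagates to equality of the spatial averages of $f$ over $A$ and $B$; by the arbitrariness of $A,B$ this forces $f$ to be $\mm$-a.e.\ constant. The main technical point---and precisely what $\CD(K,\infty)$ is tailored to supply---is the existence of a test plan with the prescribed marginals and bounded compression, whose densities along the geodesic are controlled by the $K$-convexity of $\Ent_\mm$; without this structural input, the propagation step would fail, as the earlier examples in this subsection already show. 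An essentially equivalent alternative would be to identify the $L^2$-gradient flow of $\Ch$ with the $W_2$-gradient flow of $\Ent_\mm$ and deduce the fixed-point/mixing property of $H_t$ directly from $K$-convexity, then apply characterization $(v)$ of Prop.~\ref{prop:equiv}.
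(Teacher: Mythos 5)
Your approach is genuinely different from the paper's, and it is essentially sound, but one step deserves a caveat. The paper's proof is much shorter: it approximates $f$ by bounded Lipschitz functions with $|Df_n|\to|Df|_*$ in $L^2$ (via \cite[Lemma 4.3]{AGS2}), plugs the sequence into the weak local $(1,1)$-Poincar\'e inequality that Rajala proved for $\CD(K,\infty)$ spaces, and passes to the limit to conclude that $f$ equals its ball averages and is hence constant. You instead re-derive the needed propagation ``from scratch'' by lifting a $W_2$-geodesic to a test plan of bounded compression and applying the upper-gradient inequality along curves; this is in effect an in-line version of the proof of Rajala's Poincar\'e inequality.

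The caveat: your phrase ``as a consequence of this $K$-convexity, all intermediate measures $\mu_s=\rho_s\mm$ remain absolutely continuous with densities uniformly controlled by those of the endpoints'' treats as automatic what is actually the nontrivial heart of the matter. $K$-convexity of $\Ent_\mm$ only gives an entropy bound $\int\rho_s\log\rho_s\,\dd\mm$ along the geodesic; it does not by itself yield an $L^\infty$ bound on $\rho_s$. The existence of a Wasserstein geodesic with uniformly bounded interpolated densities (``good geodesics'') is a theorem of Rajala, obtained by an excess-mass/selection argument on top of the $\CD(K,\infty)$ condition --- precisely the ingredient one needs for the bounded-compression test plan. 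So your route is valid, but you should attribute the bounded-compression step to Rajala's good-geodesics theorem rather than presenting it as a direct consequence of the definition; otherwise the argument has a gap. (Minor points: for $\mu_0=\mm|_A/\mm(A)$ with $\mm(A)<\infty$ the entropy is already finite, so the heat-regularization is not needed for that purpose; and the truncation reduction is fine but needs the observation that the truncation constants $c_N$ stabilize once $|c_N|<N$ on a set of positive measure.) Your proposed alternative via the identification $H_t=$ Wasserstein gradient flow of $\Ent_\mm$ and characterization $(v)$ of Prop.~\ref{prop:equiv} is also workable and closer in spirit to the heat-flow characterizations, though it again ultimately rests on the same $\CD$-specific structural inputs.
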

\begin{proof}
Let $f\in L^2(X,\mm)$ with $\Ch(f)=0$. By Proposition \ref{prop:equiv} it is sufficient to show that there exists $c\in \R$ such that $f=c$ $\mm$-a.e. Let $\{f_n\}$ be a sequence of bounded Lipschitz functions, $f_n\in L^2(X,\mm)$, such that $f_n\to f$ and $|Df_n|\to |Df|_{\ast}$ in $L^2(X,\mm)$. The existence of such a sequence in ensured by \cite[Lemma 4.3]{AGS2}. By applying the weak local Poincaré inequality established in \cite[Theorem 1]{rajala-calcvar} to the sequence $\{f_n\}$, and taking the limit as $n\to \infty$, we can infer that for every $x\in X$ and $r>0$ 
$$\int_{B(x,r)}|f-\langle f\rangle_{B(x,r)}|\,\dd\mm=0\, ,$$
where $\langle f\rangle_{B(x,r)}$ denotes the mean of the function $f$ in the ball $B(x,r)$. In particular $f(y)=\langle f\rangle_{B(x,r)}$ for $\mm$-a.e.~$y\in B(x,r)$. Since $r$ is arbitrary, this gives that $f$ is $\mm$-a.e.~equals to a constant. 
\end{proof}
\begin{remark}
The $L^2$-Liouville property for $\mathsf{CD}(K,\infty)$ spaces seems to be not explicitly stated in the literature. Notice that in these spaces the support of the measure is a connected subset of $X$ (actually a length space), by the very definition of the $\mathsf{CD}(K,\infty)$ condition. The subclass of $\mathsf{CD}(K,\infty)$ spaces which are also infinitesimally Hilbertian is known as the class of $\mathsf{RCD}(K,\infty)$ spaces. In this case the irreducibility of the Cheeger energy  was explicitly noticed in \cite{DelloSchiavo-Suzuki} (where also the more general, but still infinitesimally Hilbertian, $\mathsf{RQCD}$ spaces were considered). The proof we have given here follows a different strategy from the one applied in \cite{DelloSchiavo-Suzuki}.
\end{remark} 

Another class of non-smooth spaces which has been widely studied in recent years is the one of PI spaces \cite{heinonen-koskela-acta}. These are metric measure spaces $(X,\di,\mm)$ which satisfy a local doubling inequality and a weak Poincar\'e inequality, but no curvature bound a priori. Also such spaces satisfy the $L^2$-Liouville property: 

\begin{proposition}
Let $(X,\di,\mm)$ be a PI space. Then $(X,\di,\mm)$ satisfies the $L^2$-Liouville property.
\end{proposition}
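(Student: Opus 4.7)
The plan is to follow the same strategy as the proof of Proposition \ref{prop:CDIrr} for $\mathsf{CD}(K,\infty)$ spaces, which reduced the $L^2$-Liouville property to the availability of a weak local Poincar\'e inequality (there obtained via Rajala's theorem). Since a weak Poincar\'e inequality is part of the very definition of a PI space, the transfer should be essentially verbatim, with the PI assumption replacing the curvature hypothesis.

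First, by the equivalence $(i)\Leftrightarrow (ii)$ of Proposition \ref{prop:equiv}, I would reduce the claim to the following statement: for any $f\in L^2(X,\mm)$ with $\Ch(f)=0$, there exists $c\in\R$ with $f=c$ $\mm$-a.e. To this end I would pick, via \cite[Lemma 4.3]{AGS2}, bounded Lipschitz functions $f_n\in L^2(X,\mm)$ such that $f_n\to f$ strongly in $L^2(X,\mm)$ and $|Df_n|\to |Df|_{\ast}=0$ strongly in $L^2(X,\mm)$.

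Next, I would apply the weak Poincar\'e inequality to each $f_n$: there exist constants $C\geqslant 1$, $\lambda\geqslant 1$ and $p\in [1,\infty)$ such that for every ball $B=B(x,r)$,
\begin{equation*}
\frac{1}{\mm(B)}\int_B \bigl|f_n-\langle f_n\rangle_B\bigr|\,\dd\mm \leqslant C\, r \left(\frac{1}{\mm(\lambda B)}\int_{\lambda B}|Df_n|^p\,\dd\mm\right)^{1/p}.
\end{equation*}
Passing to the limit as $n\to\infty$, the right-hand side vanishes (using local doubling to ensure that balls have finite $\mm$-measure, plus H\"older's inequality to pass from $L^2$ to $L^p$ convergence of the gradients, at least when $p\leqslant 2$), while the left-hand side converges to $\mm(B)^{-1}\int_B |f-\langle f\rangle_B|\,\dd\mm$, using that $f_n\to f$ in $L^1_{\rm loc}$ (which in turn implies $\langle f_n\rangle_B\to\langle f\rangle_B$). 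Hence I would obtain $\int_B |f-\langle f\rangle_B|\,\dd\mm=0$ for every ball $B$, so $f$ equals its average on every ball $\mm$-a.e.

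Finally, fixing $x_0\in\supp(\mm)$ and letting $r\to\infty$, the averages $\langle f\rangle_{B(x_0,r)}$ must agree on overlapping balls and hence stabilize to a single constant $c\in\R$, yielding $f=c$ $\mm$-a.e.~on $X$. The main delicate point I foresee is the limiting step in the Poincar\'e inequality, where one needs $L^p_{\rm loc}$ convergence of $|Df_n|$ to $0$ starting from the given $L^2$ convergence; this is immediate when $p\leqslant 2$ via H\"older and the finiteness of $\mm$ on balls coming from doubling, and the case $p>2$ can be circumvented using the well-known self-improvement properties of Poincar\'e inequalities in PI spaces.
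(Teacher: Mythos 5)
Your proposal is correct and takes essentially the same route as the paper's proof: reduce via Proposition~\ref{prop:equiv} $(i)\Leftrightarrow(ii)$, approximate by bounded Lipschitz functions, apply the weak Poincar\'e inequality (which is part of the PI definition, replacing Rajala's theorem used in the $\mathsf{CD}(K,\infty)$ case), and pass to the limit to conclude $f$ coincides with its ball averages. The paper simply cites a PI-specific reference for the Lipschitz approximation and says to ``follow verbatim'' the $\mathsf{CD}$ proof, whereas you usefully spell out the limiting step, including the $p>2$ caveat and its resolution via self-improvement.
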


\begin{proof}
By definition, a PI space satisfies a weak local Poincar\'e inequality. Moreover, for every $f\in L^2(X,\mm)$ with $\Ch(f)=0$, there exists a sequence $\{f_n\}$ of bounded Lipschitz functions, $f_n\in L^2(X,\mm)$, such that  $f_n\to f$ and $|Df_n|\to |Df|_{\ast}$ in $L^2(X,\mm)$ (see for instance \cite[Corollary 5.15]{BjornBjorn}). We thus have all the ingredients to follow verbatim the proof of Proposition \ref{prop:CDIrr}.
\end{proof}

Finally, we can also show that the physical backreaction of D-branes and O-planes gives rise to a singular but reducible space, thanks to the following:
\begin{proposition}\label{prop:ManSing}
Let  $(X,\di,\mm)$ be a metric measure space with the following properties:
\begin{itemize}
\item there exists a closed set $S\subset X$ such that  $(X\setminus S,\di\llcorner_{X\setminus S} ,\mm\llcorner_{X\setminus S})$ is isomorphic  to an open subset $O$ of a smooth weighted Riemannian manifold $(M,g, e^\phi \mathrm{dvol}_g)$, meaning that there exists an isometry  $\psi$ from $(X\setminus S,\di\llcorner_{X\setminus S})$ to $(O,\di_g\llcorner_{O})$ which sends the measure $\mm\llcorner_{X\setminus S}$ to the weighted volume measure $e^\phi \mathrm{dvol}_g \llcorner_{O}$; 
\item $\mm(S)=0$;
\item $X\setminus S$ is connected by rectifiable curves, i.e.~for every $x,y\in X\setminus S$ there exists a curve $\gamma:[0,1]\to X\setminus S$ of finite length such that $\gamma(0)=x$ and $\gamma(1)=y$. 
\end{itemize}
Then $(X,\di,\mm)$ satisfies the $L^2$-Liouville property.
\end{proposition}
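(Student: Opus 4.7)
By the equivalence $(i) \Leftrightarrow (ii)$ in Proposition \ref{prop:equiv}, it suffices to show that every $f \in L^2(X,\mm)$ with $\Ch(f) = 0$ is $\mm$-a.e.\ constant. I would attack this by transporting the problem to the smooth weighted Riemannian side, where zero Cheeger energy is well understood.

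First I would pick $f \in L^2(X,\mm)$ with $|Df|_*=0$ $\mm$-a.e., and invoke \cite[Lemma 4.3]{AGS2} (used already in Proposition \ref{prop:CDIrr}) to obtain a sequence of bounded Lipschitz functions $f_n \in L^2(X,\mm)$ with $f_n \to f$ in $L^2(X,\mm)$ and $|Df_n| \to 0$ in $L^2(X,\mm)$. Since $S$ is closed, $X \setminus S$ is open, so for any $x \in X \setminus S$ the accumulation points $y \to x$ in the definition of the slope eventually lie in $X \setminus S$; in particular the slope $|Df_n|(x)$ computed in $(X,\di)$ agrees with the slope of $f_n|_{X\setminus S}$ computed in $(X\setminus S, \di\llcorner_{X\setminus S})$. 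Because $\psi$ is an isometry sending $\mm\llcorner_{X\setminus S}$ to $e^\phi \mathrm{dvol}_g\llcorner_O$, pushing everything forward via $\psi$ yields bounded Lipschitz functions $\tilde f_n := f_n \circ \psi^{-1}$ on $O$ with $\tilde f_n \to \tilde f := f \circ \psi^{-1}$ in $L^2(O, e^\phi \mathrm{dvol}_g)$ and $|D\tilde f_n|\to 0$ in $L^2(O, e^\phi \mathrm{dvol}_g)$ (the transfer of $L^2$-norms uses $\mm(S)=0$). By the very definition of Cheeger energy, this gives $\Ch_O(\tilde f) = 0$, where $\Ch_O$ denotes the Cheeger energy of the metric measure space $(O, \di_g\llcorner_O, e^\phi \mathrm{dvol}_g\llcorner_O)$.

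Next, since $(M,g, e^\phi\mathrm{dvol}_g)$ is a smooth weighted Riemannian manifold, the Cheeger energy coincides on $O$ with the classical weighted Dirichlet energy $\frac{1}{2}\int_O |\nabla \tilde f|^2_g e^\phi \mathrm{dvol}_g$ (as noted in Section \ref{sec: PEMS}). Therefore $\nabla \tilde f = 0$ $\mathrm{dvol}_g$-a.e.\ on $O$, so $\tilde f$ is locally constant on $O$. The hypothesis that $X \setminus S$ is connected by rectifiable curves transfers through the isometry $\psi$ to $O$, making $O$ (path-)connected; hence $\tilde f$ is $e^\phi \mathrm{dvol}_g$-a.e.\ equal to a constant $c \in \R$. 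Pulling back by $\psi$ and using $\mm(S)=0$ to extend from $X \setminus S$ to $X$ yields $f = c$ $\mm$-a.e., as required.

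The main obstacle is the bookkeeping in the middle paragraph: one must verify that the slope in $(X,\di)$ at interior points of $X \setminus S$ agrees with the slope computed in the restricted space, so that the approximating Lipschitz sequence exhibiting $\Ch(f)=0$ on $X$ transfers cleanly to a sequence witnessing $\Ch_O(\tilde f)=0$ on $O$. This is where the hypotheses that $S$ is closed and $\psi$ is an isometry (in the strong sense of preserving both distance and measure) are used essentially.
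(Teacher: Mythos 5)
Your proposal is correct and follows essentially the same strategy as the paper's proof: transfer $\Ch(f)=0$ through the measure-preserving isometry $\psi$ to the smooth weighted Riemannian side, conclude $\nabla f=0$ a.e.\ on the regular part, and use connectedness of $X\setminus S$ together with $\mm(S)=0$ to get $\mm$-a.e.\ constancy. The only cosmetic difference is that you close the argument by invoking the standard Sobolev fact that a function with vanishing weak gradient on a connected open set is a.e.\ constant, whereas the paper integrates $\nabla f\cdot\dot\gamma$ along a rectifiable curve joining two points via the fundamental theorem of calculus — two phrasings of the same elementary step.
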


\begin{proof}
Let $f\in L^2(X,\mm)$ with $\Ch(f)=0$. The result follows if we show that $f$ is constant $\mm$-a.e., thanks to Proposition \ref{prop:equiv}. 
\\Since by assumption $(X\setminus S,\di\llcorner_{X\setminus S} ,\mm\llcorner_{X\setminus S})$ is isomorphic to an open subset of a smooth weighted Riemannian manifold, by the expression of the Cheeger energy we infer $|\nabla f|=0$ on $X\setminus S$ $\mm$-a.e., where $\nabla f$ is the standard gradient in smooth Riemannian geometry.
\\Let $x,y\in X\setminus S$ be arbitrary points. Then, by assumption, there exists  a curve $\gamma:[0,1]\to X\setminus S$ of finite length such that $\gamma(0)=x$ and $\gamma(1)=y$.  
\\Then, by the fundamental theorem of calculus,
$$
|f(x)-f(y)|=\left| \int_{0}^{1} \nabla f(\gamma(t)) \cdot \dot{\gamma}(t)\, dt  \right|=0\, ,
$$
where $\dot{\gamma}$ is the velocity of the curve (which is well defined for a.e.~$t\in [0,1]$ by the rectifiability of $\gamma$) and $\cdot$ denotes the Riemannian scalar product on $X\setminus S$ which we can think as identified to an open subset of a smooth weighted Riemannian manifold.
\\It follows that there exists a constant $c\in \R$ such that $f=c$ on $X\setminus S$. Since by assumption $\mm(S)=0$, we conclude that $f(x)=c$ for  $\mm$-a.e.~$x\in X$.
\end{proof}

\begin{remark}\label{rem: OpDp}
The assumptions of Proposition \ref{prop:ManSing} are satisfied in the physically relevant situation of a smooth weighted manifold outside of a singular set where the metric-measure structure is asymptotic to localised sources of co-dimension at least 2, such as O-planes or D$p$-branes of co-dimension $\geqslant 2$. Thus such metric measure spaces satisfy the $L^2$-Liouville property. Also D$8$-branes and O$8$-planes satisfy the $L^2$-Liouville property, but they require a separate discussion. Recall that, in a neighborhood $\{|r|<\epsilon\}$ of the closed singular set  $\{r=0\}$,  the metric is of the form
        \begin{equation}\label{eq:metricEndAs8}
	g= \dd x^2_{9-d}  + (1- h_8 |r|) \dd r^2 
	\end{equation}
	where,   $h_8>0$  is a positive constant for D$8$ (resp. $h_8<0$ for O$8$), and the measure  is given by 
	$$
	\mm\llcorner_{\{|r|<\epsilon \}}= \sqrt{1- h_8 |r|} \, \mathsf{\dd vol}_g \llcorner_{\{|r|<\epsilon \}}
	$$
	where $ \mathsf{\dd vol}_g$ is the Riemannian volume measure associated to $g$.
	By applying Proposition \ref{prop:ManSing} to the smooth part of space, we obtain that if $f$ is an $L^2$ harmonic function, then there exist two constants $c_1,c_2$ such that $f|_{r<0}=c_1$ and $f|_{r>0}=c_2$. We claim that it must hold $c_1=c_2$.  Assume by contradiction $c_1\neq c_2$. Then $f$ would have a jump along the singular set $\{r=0\}$. However the metric and the measure are bounded (above and below away from 0) in $\{|r|<\epsilon \}$; thus such an $f$ would not be an element of $W^{1,2}$, yielding a contradiction (recall that a $W^{1,2}$-function  cannot jump along a set of co-dimension one $S$, provided the ambient metric-measure structure is not too degenerate near $S$). 
\end{remark}
% section const (end)
\subsection{Spectral theory}\label{sec:spectral}
In this section we review some terminology and basic definitions of spectral theory. The material is classical and can be found for instance in \cite{Reed-Simon-1,davies-book}. For a gentle introduction see also T. Tao's blog \cite{terryTspect}. In the final part of the section we then notice how the metric measure Laplacian enters in this framework.

We start with a linear operator on a complex Hilbert space $(H, \langle \cdot,\cdot \rangle)$. By this we mean a couple $(A,D(A))$ where $D(A)$ is a dense subset of $H$, called the domain of $A$, and $A:D(A)\rightarrow H$ is linear. We remark that we work with \emph{unbounded} operators, meaning that $D(A)$ can be strictly contained in $H$ (and this is the typical situation we will encounter). The operator $A$ is symmetric if $\langle Ax,y\rangle=\langle x,Ay\rangle$ for every $x,y\in D(A)$, and nonnegative if $\langle Ax,x\rangle$ is a nonnegative real number for every $x\in D(A)$. We say that $A$ is closed if the set $\{(x,Ax) :  x\in D(A)\}$ is closed as a subset of $H\times H$.
 
The \emph{adjoint} of the operator $A$ is the couple $(A^\dagger,D(A^\dagger))$  where $D(A^\dagger)$ is the set of vector $y\in H$ such that the map $x\mapsto \langle Ax,y\rangle$ is a bounded linear operator on $D(A)$. For such $y$, we define $A^\dagger y$ has the only element of $H$ such that $\langle Ax,y\rangle=\langle x,A^\dagger y\rangle$ for every $x\in D(A)$, $y\in D(A^\dagger)$. Notice that the well posedness of this definition comes from the density of $D(A)$ in $H$ and from an application of the Riesz representation theorem. One can easily see that $A^\dagger$ is a linear operator and, when $A$ is symmetric, $A^\dagger$ is an extension of $A$, i.e. $D(A)\subset D(A^\dagger)$ and $A=A^\dagger$ on $D(A)$. In general $D(A)$ can be strictly contained in $D(A^\dagger)$, and we call \emph{self-adjoint} the symmetric operators $A$ such that $D(A)=D(A^\dagger)$. The subclass of self-adjoint operators  is of great importance,  as the spectral theorem applies to those (see \cite{davies-book}).
\\When working with operators of differential nature,  usually the initial domain where the operator $A$ is defined is ``small'' (think of a differential operator initially defined only on smooth functions), leading to a ``large'' $D(A^\dagger)$ and thus to the lack of self-adjointness of $A$. One is thus interested in extending $A$ by enlarging the initial domain in order to obtain a self-adjoint operator (typically one passes from smooth functions to a suitable Sobolev space). It is possible that an operator admits many self-adjoint extensions, and we call \emph{essentially self-adjoint} the important class of operators that admit a unique self-adjoint extension.

The \emph{regular values} $\rho(A)$ of an operator $A$ are the values $\lambda\in \mathbb{C}$ such that $(\lambda {\rm Id}-A)$ has a bounded inverse. The \emph{spectrum} $\sigma(A)$ is the set of numbers $\lambda$ that are \emph{not} regular values. A non-zero function $f\in D(A)$ is an \emph{eigenfunction} of $A$ of \emph{eigenvalue} $\lambda$ if  $A f=\lambda f$. Notice that for a nonnegative operator $A$, all the eigenvalues lie in the set $[0,\infty)$. The set of all eigenvalues constitutes the \emph{point spectrum} while the \emph{discrete spectrum} $\sigma_d(A)$ is the set of eigenvalues which are isolated in the point spectrum and with finite dimensional eigenspace. Finally the \emph{essential spectrum} is defined as $\sigma_{ess}(A):=\sigma(A)\setminus \sigma_{d}(A)$.

The definitions that we have introduced in this section are of particular interest for us since they can be applied to the Laplacian defined in Def.\;\ref{def:Laplacian} if the underlying metric measure space is infinitesimally Hilbertian. In this case $\Delta$ is a densely defined, nonnegative, self-adjoint operator on its domain $D(\Delta)\subset L^2$. We can thus study it by taking advantage of the important results of spectral theory.

We have in particular that eigenfunctions of the Laplacian relative to different eigenvalues are orthogonal. For spaces of finite measure, constant functions are eigenfunctions relative to $\lambda_0=0$, hence any other eigenfunction has null mean value.

We can also make use of the variational characterization of the eigenvalues. More precisely, given $f\in W^{1,2}(X, \di, \mm)$, $f\not\equiv 0$, we introduce the Rayleigh quotient defined as
\begin{equation}\label{eq: Rayleigh} 
\mathcal{R}(f):= \frac{2\Ch(f)}{\int_X |f|^2\,\dd\mm}\, .
\end{equation}
Notice that for any eigenfunction $f_{\lambda}$ of eigenvalue $\lambda$, it holds $\lambda=\mathcal{R}(f_{\lambda}).$ We can then infer that the set of eigenvalues below $\inf \sigma_{ess}(\Delta)$ is at most countable and, listing the elements in increasing order $\lambda_0<\lambda_1\leqslant \ldots\leqslant \lambda_k\leqslant \ldots$, the following characterization holds
\begin{equation}\label{eq:defeigk}
\lambda_k= \min_{V_{k+1}}\, \max_{f\in V_{k+1},\\ f\not\equiv 0} \ \mathcal{R}(f)\, ,
\end{equation}
where $V_{k}$ denotes a $k$-dimensional subspace of $W^{1,2}(X,\di,\mm)$.

\subsection{The singular set of D$p$-branes is polar}
\label{sub:polar}

The D$p$-branes are examples of singular spaces (more precisely they can be modelled by possibly non-smooth metric measure spaces), which are smooth weighted Riemannian manifolds outside of a singular set $\mathcal{S}$. 

As we saw earlier, it is interesting to study the spectrum of the Laplacian on such spaces. In the previous sections, we recalled the definition of Laplacian for metric measure spaces (in terms of the Cheeger energy) and how it is linked to standard spectral theory; a natural way to address the problem is thus to study the spectrum of such Laplacian. An a-priori different approach would be to restrict the Laplace operator to smooth functions with compact support outside of the singular set and study its spectral properties.  The goal of this section is to show that these two approaches are equivalent for D$p$-branes, $p<8$ (see Corollary \ref{Cor:Delta=Delta0} for the precise statement).

First, let us define the metric measure spaces we will consider. We refer to our previous works \cite{deluca-deponti-mondino-t, deluca-deponti-mondino-t-entropy} for discussions and further references. 

\begin{definition}[Asymptotically D$p$-brane  metric measure spaces]\label{def: Dp-brane mms}
An \emph{ asymptotically D$p$-brane  metric measure space} is a smooth Riemannian manifold $(X,g)$ that is glued (in a smooth way) to a finite number of ends where the metric $g$ is asymptotic to a D$p$-brane singularity in the following sense. 
\begin{itemize}
\item Case $p=0,1,\ldots, 6$.  In the end, as $r\to 0$, the metric is asymptotic to
	\begin{equation}\label{eq:metricEndAs1}
	\dd x^2_{p+1-d} +\left( \frac{r_0}{r} \right)^{7-p} \left( \dd r^2+ r^2 \dd s^2_{\mathbb{S}^{8-p}} \right) 
	\end{equation}
	with $r_0^{7-p}= g_s (2\pi l_s)^{7-p}/((7-p) \mathrm{Vol}(\mathbb{S}^{8-p}))$;  as usual $g_s$ is the string coupling and $l_s$ is the string length.
        \item Case $p=7$. In a neighborhood $\{r<\epsilon\}$ of the closed singular set  $\{r=0\}$, as $r\to 0$,  the metric is asymptotic to
        \begin{equation}\label{eq:metricEndAs7}
	\dd x^2_{8-d}  - \frac{2\pi}{g_s} \log (r/r_0)  \left( \dd r^2+ r^2 \dd s^2_{\mathbb{S}^{1}} \right).
	\end{equation}
\end{itemize}
In all the above cases, we endow $X$ with a weighted measure, and view it as a metric measure space $(X,\di,\mm)$ where:
\begin{itemize}
\item The distance $\di$ between two points $x,y \in X$ is given by 
$$\di(x,y):=\inf_{\gamma\in \Gamma(x,y)} \int \sqrt{g\left(\gamma'(t),\gamma'(t)\right)} \dd t\,,$$
where $\Gamma(x,y)$ denotes the set of absolutely continuous curves joining $x$ to $y$.
\item The measure $\mm$ is a weighted volume measure $\mm:=\ee^f\mathsf{\dd vol}_g$, with the function $\ee^f$ smooth outside the tips of the ends and gives zero mass to the singular set.
Near the singularity, the weight has the following asymptotics:
\begin{equation}\label{eq:f-Dbranes}
	\ee^f \sim H^\frac{p-7}{2}\qquad\qquad \text{for}\; r\to 0 \;,
\end{equation}
and, near the singularity,  
\begin{equation}\label{eq:Harm}
	H\sim \left\{\begin{array}{llr}
		   (r/r_0)^{p-7}  &&0 \leqslant p < 7 \\
		  - \frac{2\pi}{g_s}\log (r/r_0) & &p = 7 
	\end{array}	\right.\qquad\qquad \text{for}\; r\to 0 \;,
\end{equation}
where $r_0^{7-p}= g_s (2\pi l_s)^{7-p}/((7-p) \mathrm{Vol}(\mathbb{S}^{8-p}))$ for $p<7$.
\end{itemize}
\end{definition}

Before, see \eqref{eq:defW12norm},  we recalled the notion of Sobolev space  $W^{1,2}(X,\sfd,\mm)$ associated to a  metric measure space $(X, \sfd, \mm)$.
Notice that, as we proved in \cite[Proposition 6.4]{deluca-deponti-mondino-t-entropy}, this Sobolev space is Hilbert for asymptotically D$p$-brane  metric measure spaces. 

Let us now recall the notion of \emph{polar} set in $X$.  The rough idea is that, as sets of zero $\mm$-measure are ``invisible''  by  Lebesgue $L^{2}$-functions (or, more precisely, two Borel functions which agree outside of a set of measure zero correspond to the same element in the Lebesgue space  $L^{2}(X,\mm)$), polar sets are ``invisible''  by Sobolev $W^{1,2}$-functions (or, more precisely, two Borel functions which agree outside of a polar set correspond to the same element in the Sobolev space  $W^{1,2}(X,\sfd,\mm)$).

\begin{definition}[Polar set]
Let $(X, \sfd, \mm)$ be a metric measure space. A Borel subset $E\subset X$ is said to be \emph{polar} if
\begin{equation}\label{eq:DefPolar}
\inf\{ \|u\|_{W^{1,2}} \colon u \text{ Lipschitz, }  u=1 \text{ on a neighbourhood of } E \text{ and }  0\leqslant u\leqslant 1\}=0.
\end{equation}
\end{definition}
Equivalently, a subset $E\subset X$ is polar if it has zero 2-capacity (where the 2-capacity of $E$ is defined as the left hand side of \eqref{eq:DefPolar}).

\begin{proposition}\label{Prop:SingPolar}
Let  $(X, \sfd, \mm)$ be an asymptotically D$p$-brane metric measure space in the sense of Definition \ref{def: Dp-brane mms}. In case $p=6,7$, assume also that, for each end, the Riemannian factor   $\dd x^2_{p+1-d}$ has finite volume. Denote by  $\mathcal{S}$ the (minimal, under inclusion) closed singular set such that $X\setminus \mathcal{S}$ is isomorphic to an open subset of a smooth weighted Riemannian manifold. 

Then $\mathcal{S}$ is polar.
\end{proposition}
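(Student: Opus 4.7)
To prove polarity of $\mathcal{S}$, I would exhibit, for every sufficiently small $\varepsilon>0$, a Lipschitz function $u_\varepsilon:X\to[0,1]$ that equals $1$ on an open neighbourhood of $\mathcal{S}$ and whose $W^{1,2}$-norm tends to $0$. Since $\mathcal{S}$ is concentrated in the finitely many tips of the ends, it suffices to work one end at a time: in each end I would take $u_\varepsilon$ radial in the transverse coordinate $r$ and compactly supported near the tip, and extend by $0$ off the ends. The global cutoff is then the pointwise maximum of these local cutoffs.

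For $0\leq p\leq 6$ I would use the standard piecewise-linear cutoff $u_\varepsilon(r)=\max\{0,\min\{1,(2\varepsilon-r)/\varepsilon\}\}$, equal to $1$ for $r\leq\varepsilon$ and to $0$ for $r\geq 2\varepsilon$. For $p=7$, because the singular locus has transverse codimension two, a linear cutoff no longer has vanishing energy, so I would instead take the logarithmic cutoff $u_\varepsilon(r)=\max\{0,\min\{1,\log(r_0/r)/\log(r_0/\varepsilon)\}\}$, the classical choice realising vanishing $2$-capacity for a point in the plane.

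The heart of the argument is then a direct computation using the asymptotic forms \eqref{eq:metricEndAs1}--\eqref{eq:Harm}. A short manipulation of the exponents shows that, up to positive constants, the weighted volume element near the tip reduces to $r\,\dd r$ (times the volume forms on $\mathbb{S}^{8-p}$ and on the parallel factor) for $0\leq p\leq 6$, and to $H(r)\,r\,\dd r$ (times the volume forms on $\mathbb{S}^{1}$ and on the parallel factor) for $p=7$; similarly, for a radial function the weighted Dirichlet integrand reduces to $(\partial_{r}u)^{2}\,r^{8-p}\,\dd r$ in the first case and to $(\partial_{r}u)^{2}\,r\,\dd r$ in the second, the factors of $H$ cancelling between $g^{rr}$ and $\sqrt{\det g}$. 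Plugging in the cutoffs above, and using finite volume of the parallel factor $\dd x^{2}_{p+1-d}$ (automatic for $p\leq 5$ and imposed by hypothesis for $p=6,7$) together with finite volume of the sphere, one obtains $\|u_\varepsilon\|_{L^{2}}^{2}\leq C\varepsilon^{2}(1+|\log\varepsilon|)$, $2\Ch(u_\varepsilon)\leq C\varepsilon^{7-p}$ for $0\leq p\leq 6$, and $2\Ch(u_\varepsilon)\leq C/\log(r_{0}/\varepsilon)$ for $p=7$. All of these tend to $0$ as $\varepsilon\to 0$, giving $\|u_\varepsilon\|_{W^{1,2}}\to 0$ and hence \eqref{eq:DefPolar}.

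The main obstacle is the borderline case $p=7$: the codimension-two singular locus forces the logarithmic cutoff, and one must simultaneously verify that the divergent weight $H$ in the measure does not spoil the decay of the $L^{2}$ piece; this is precisely where the finite-volume assumption on $\dd x^{2}_{p+1-d}$ is essential. The computation also explains the sharp range of the statement: for $p=8$ the singular locus has codimension one and neither cutoff decays, consistent with the $W^{1,2}$ jump obstruction discussed in Remark~\ref{rem: OpDp}.
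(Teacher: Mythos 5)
Your proof is correct, and for $p=7$ it is actually simpler and more direct than the paper's. Two small remarks first. For $p\leqslant 5$ the exponent $(7-p)/2>1$ makes $\int_0 r^{(p-7)/2}\,\dd r$ diverge, so the tip is at infinite distance and $\mathcal{S}=\emptyset$; the paper notes that the proposition is then trivially true, and no cutoff computation is needed (which is just as well, since the finite-volume hypothesis on $\dd x^2_{p+1-d}$ is only imposed for $p=6,7$, so the $L^2$ norm of your cutoff need not be finite for $p\leqslant 5$). For $p=6$ your linear cutoff in $r$ is the same idea as the paper's, which works in the variable $\rho=2\sqrt{r}$; either way the energy $\sim\varepsilon^{7-p}\to 0$ and the argument closes directly. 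The genuine difference is at $p=7$. The paper switches to a coordinate $\rho$ in which the transverse metric is $\dd\rho^2+f(\rho)\dd\theta^2$ with $0\leqslant f\leqslant\rho^2$ and uses the \emph{linear} cutoff $\psi_k$; with the measure $\sim\sqrt{f(\rho)}\,\dd\rho\lesssim\rho\,\dd\rho$ one gets $\|\psi_k\|_{L^2}\to 0$ but only $\sup_k\|\psi_k\|_{W^{1,2}}<\infty$, and the paper must then invoke Mazur's lemma to pass from weak to strong $W^{1,2}$-convergence of convex combinations. Your logarithmic cutoff avoids this entirely: with $(\partial_r u)^2\,r\,\dd r$ as the weighted Dirichlet integrand (the factors of $H$ cancel, as you say), one computes $2\Ch(u_\varepsilon)\sim 1/\log(r_0/\varepsilon)\to 0$ directly, so the definition of polarity is verified without any extraction argument. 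One inaccuracy to fix: your claimed $\|u_\varepsilon\|_{L^2}^2\leqslant C\varepsilon^2(1+|\log\varepsilon|)$ is not the right estimate for $p=7$, because the logarithmic cutoff is supported on all of $\{r\leqslant r_0\}$ and the measure has the extra weight $H\sim\log(r_0/r)$; the dominant contribution comes from the annulus $\varepsilon\leqslant r\leqslant r_0$ and is of size $\sim(\log(r_0/\varepsilon))^{-2}$. This still tends to $0$, so the conclusion is unaffected, but the stated bound should be corrected.
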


\begin{proof}
\textbf{Case $p=0,\ldots, 5$}. The statement is trivially true, as $\mathcal{S}=\emptyset$: indeed, the singularity is at infinity and $X$ is a smooth weighted Riemannian manifold.
\smallskip

For the cases $p=6,7$, notice that it is enough the prove that, for each end, the singular set $\{r=0\}$ is polar. 

\textbf{Case $p=6$}. Consider the coordinates  $(x, \Theta, r)\in M^{7-d}\times \mathbb{S}^{2} \times [0,\infty)$ as in  \eqref{eq:metricEndAs1}.
After the change of variable $\rho=2 \sqrt{r}$, in a neighbourhood of $\{\rho=0\}$, the metric is asymptotic to 
$$ 
\dd x_{7-d}^{2}+r_{0} \dd\rho^{2}+\, \frac{r_{0}}{4}\rho^{2} \dd s^{2}_{\mathbb{S}^{2}}\, ,
$$
with measure asymptotic to (up to a multiplicative constant)
$$
\rho^{3} {\rm vol}_{M^{7-d}}(\dd x)\, {\rm vol}_{\R}(\dd \rho)\, {\rm vol}_{\mathbb{S}^{2}} (\dd \Theta)\, .
$$
For each $k>0$, consider the following Lipschitz functions:
\begin{equation}\label{eq:defpsi1k}
\psi_{k}(\rho):=
\begin{cases}
1 \; \text{ for } \rho \leqslant k^{-1} \\
2- k \rho \; \text{ for } \rho \in  [k^{-1}, 2k^{-1}] \\
0 \; \text{ for } \rho \geqslant  2k^{-1}.
\end{cases}
\end{equation}
Using that the Riemannian factor $\dd x_{7-d}^{2}$ has finite volume, it is a straightforward computation to check that $\|\psi_{k}\|_{W^{1,2}}\to 0$ as $k\to \infty$. Thus, for $E=\{\rho=0\}$, the infimum in \eqref{eq:DefPolar} is zero and the set $\{\rho=0\}$ is polar.
\smallskip

\textbf{Case $p=7$}. Consider the coordinates  $(x, \Theta, r)\in  M^{8-d}\times \mathbb{S}^{1} \times [0,\infty)$ as in \eqref{eq:metricEndAs7}.
After the change of variable $\rho=\int_{0}^{r} \sqrt{\log(s)} \dd s$, in a neighbourhood of $\{\rho=0\}$, the metric is asymptotic to 
$$
\dd x_{8-d}^{2}+\frac{2\pi}{g_{s}} \left(\dd \rho^{2}+f(\rho) \dd s^{2}_{\mathbb{S}^{1}} \right),
$$
with measure asymptotic to (up to a multiplicative constant)
$$
\sqrt{f(\rho)}\,   {\rm vol}_{M^{8-p}}(\dd x)\, {\rm vol}_{\R}(\dd \rho)\, {\rm vol}_{\mathbb{S}^{1}} (\dd \Theta),
$$
where $f(\cdot)$ satisfies
$$
0\leqslant f(\rho)\leqslant \rho^{2}.
$$
Let $\psi_{k}$ be defined as in \eqref{eq:defpsi1k}. Using that Riemannian factor $\dd x_{8-d}^{2}$ has finite volume, it  is a straightforward computation to check that 
\begin{align}
\|\psi_{k}\|_{L^{2}}\to& \; 0\,,  \label{eq:psikL20}\\
\sup_{k} \|\psi_{k}\|_{W^{1,2}}<&\; \infty\,. \label{psikW12Bdd}
\end{align}
Since $W^{1,2}$ is a Hilbert space, the norm bound \eqref{psikW12Bdd} implies that we can extract a subsequence $\psi_{k'}$ which converges weakly in $W^{1,2}$ to some $\psi\in W^{1,2}$. Since $\psi_{k'}$ also converges weakly in $L^{2}$ to $\psi$, and we know from \eqref{eq:psikL20} that $\psi_{k'}$ converges  to $0$ strongly in $L^{2}$, we infer that $\psi=0$. So far we constructed a sequence $(\psi_{k'})\subset W^{1,2}$ converging to $0$ \emph{weakly} in $W^{1,2}$, where each $\psi_{k'}$ is equal to $1$ on a neighbourhood of the singular set $\{\rho=0\}$. 
By Mazur's Lemma, we can construct a sequence $(\phi_{j})\subset W^{1,2}$ of finite convex combinations of elements in  $(\psi_{k'})$ which converges to $0$ strongly in $W^{1,2}$. More precisely, there exists a function $N:\mathbb{N}\to \mathbb{N}$ and a sequence of sets of real numbers
$$
\{ \alpha(j)_{k'}\in [0,1] \colon k'=j, \ldots, N(j)\}
$$
with
$$
\sum_{k'=j}^{N(j)} \alpha(j)_{k'}=1\, ,
$$
such that the sequence $(\phi_{j})_{j\in \N}$ defined by the convex combination
$$
\phi_{j}=\sum_{k'=j}^{N(j)} \alpha(j)_{k'} \psi_{k'}
$$
converges strongly to $0$ in $W^{1,2}$. From its explicit expression,   it is clear that $\phi_j$ is equal to $1$ on a neighbourhood of the singular set $\{\rho=0\}$ (since each $\psi_{k'}$ has this property).
Thus, for $E=\{\rho=0\}$, the infimum in \eqref{eq:DefPolar} is zero and the set $\{\rho=0\}$ is polar.
\end{proof}

The  following consequence of the fact that $\mathcal{S}$ is polar is well known to experts. We give a self-contained proof for the reader's convenience.

\begin{proposition}\label{prop:W12=W120}
Let  $(X, \sfd, \mm)$ be an asymptotically D$p$-brane metric measure space satisfying the assumptions of Proposition \ref{Prop:SingPolar} and let  $\mathcal{S}$ be the singular set of $X$.
Let $W^{1,2}_{0,\mathcal{S}}(X,\di,\mm)$ be the closure (in $W^{1,2}$ topology) of the set of smooth functions compactly supported in $X\setminus \mathcal{S}$. Then $W^{1,2}_{0,\mathcal{S}}(X,\di,\mm)= W^{1,2}(X,\di,\mm)$; more precisely, the identity map $f\mapsto f$ is an isomorphism of Hilbert spaces between $W^{1,2}_{0,\mathcal{S}}(X,\di,\mm)$ and  $W^{1,2}(X,\di,\mm)$.
\end{proposition}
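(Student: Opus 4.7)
The inclusion $W^{1,2}_{0,\mathcal{S}}(X,\di,\mm)\subseteq W^{1,2}(X,\di,\mm)$ is immediate, since $C^{\infty}_{c}(X\setminus\mathcal{S})\subset W^{1,2}(X,\di,\mm)$ and the latter space is complete (so the closure is contained in it). All the work goes into the reverse inclusion: every $f\in W^{1,2}(X,\di,\mm)$ is the $W^{1,2}$-limit of smooth functions compactly supported in $X\setminus\mathcal{S}$. My plan is to factor this through two cutoffs: first remove a neighbourhood of $\mathcal{S}$ using polarity, then compactify the support in the smooth region, and finally mollify on the smooth weighted manifold $X\setminus\mathcal{S}$.

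First I would reduce to $f$ bounded and Lipschitz: truncating $f$ by $T_{n}f:=\max(-n,\min(n,f))$ yields a sequence in $W^{1,2}$ converging in norm to $f$, and bounded Lipschitz functions are dense in $W^{1,2}(X,\di,\mm)$. So fix $f$ bounded and Lipschitz. By Proposition \ref{Prop:SingPolar}, $\mathcal{S}$ is polar, so there exist Lipschitz functions $\psi_{k}$ with $0\leqslant \psi_{k}\leqslant 1$, equal to $1$ on an open neighbourhood $U_{k}$ of $\mathcal{S}$, and with $\|\psi_{k}\|_{W^{1,2}}\to 0$. Set $f_{k}:=(1-\psi_{k})f$. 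Each $f_{k}$ is bounded Lipschitz and vanishes on $U_{k}\supset\mathcal{S}$. The Leibniz-type inequality for minimal relaxed gradients of products of bounded Lipschitz functions gives
\[
|Df_{k}|_{*}\;\leqslant\;(1-\psi_{k})|Df|_{*}+|f|\,|D\psi_{k}|_{*}\qquad \mm\text{-a.e.}
\]
The second term vanishes in $L^{2}$ since $\||D\psi_{k}|_{*}\|_{L^{2}}\to 0$ and $f\in L^{\infty}$. For the first, up to a subsequence $\psi_{k}\to 0$ $\mm$-a.e. (since $\psi_{k}\to 0$ in $L^{2}$), and dominated convergence with dominant $|Df|_{*}$ gives $(1-\psi_{k})|Df|_{*}\to |Df|_{*}$ in $L^{2}$. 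Combined with $\|f_{k}-f\|_{L^{2}}=\|f\psi_{k}\|_{L^{2}}\leqslant\|f\|_{\infty}\|\psi_{k}\|_{L^{2}}\to 0$, this shows $f_{k}\to f$ in $W^{1,2}$.

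It remains to approximate each $f_{k}$ in $W^{1,2}(X,\di,\mm)$ by elements of $C^{\infty}_{c}(X\setminus\mathcal{S})$. Since $f_{k}$ vanishes in the open set $U_{k}\supset\mathcal{S}$, it is canonically identified with a bounded Lipschitz $W^{1,2}$ function on the smooth weighted Riemannian manifold $(X\setminus\mathcal{S},g,\ee^{\phi}\mathsf{\dd vol}_{g})$. On this smooth manifold I would use standard tools: an exhaustion by relatively compact open sets $\Omega_{R}\Subset X\setminus\mathcal{S}$ together with cutoff functions $\chi_{R}$ equal to $1$ on $\Omega_{R}$, supported in $\Omega_{2R}\Subset X\setminus\mathcal{S}$, with $|\nabla\chi_{R}|\leqslant 1/R$, so that $\chi_{R}f_{k}\to f_{k}$ in $W^{1,2}$ by the same Leibniz estimate as above; then mollify $\chi_{R}f_{k}$ on the smooth manifold via a partition of unity and local convolutions in charts to obtain smooth functions compactly supported in $X\setminus\mathcal{S}$ converging in $W^{1,2}$ norm. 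A diagonal argument then places $f$ in $W^{1,2}_{0,\mathcal{S}}(X,\di,\mm)$.

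The main obstacle I anticipate is exactly the interplay between the ``polar cutoff'' and the ``compact-support cutoff'' at the non-compact ends of $X$: one must ensure that the cutoffs $\chi_{R}$ can genuinely be chosen in $C^{\infty}_{c}(X\setminus\mathcal{S})$ (in particular away from $\mathcal{S}$), and that the approximations remain uniform across the different brane ends with their asymptotic metrics \eqref{eq:metricEndAs1}--\eqref{eq:metricEndAs7}. The cleanest way to handle this is to run the very same Mazur-type argument used in the proof of Proposition \ref{Prop:SingPolar} (which combined weak $W^{1,2}$-convergence of the polar cutoffs with Mazur's lemma to produce strong approximations) together with the convolution-based Meyers--Serrin approximation on the smooth weighted manifold $X\setminus\mathcal{S}$, which is standard since the weight $\ee^{\phi}$ is smooth there. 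The Leibniz-type control of the minimal relaxed gradient of products is the computational workhorse throughout.
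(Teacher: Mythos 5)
Your proof is correct and follows essentially the same three-step route as the paper's: truncate $f$ to bounded, multiply by the polar cutoffs $(1-\psi_k)$ via a Leibniz-type estimate on the minimal relaxed gradient, then approximate on the smooth weighted manifold $X\setminus\mathcal{S}$ by compactly-supported smooth functions, concluding with a diagonal argument. Two minor simplifications are available: the preliminary reduction to Lipschitz $f$ is unnecessary (the Leibniz bound you invoke already holds for $f\in W^{1,2}\cap L^\infty$, which is exactly what the paper uses), and the Mazur-lemma step at the end is superfluous in this proposition, since the definition of a polar set already supplies cutoffs $\psi_k$ converging \emph{strongly} to $0$ in $W^{1,2}$ — Mazur was only needed inside the proof of Proposition~\ref{Prop:SingPolar} for the $p=7$ case, where one first obtained merely weak convergence.
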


\begin{proof}
It is clear that the inclusion map $f\mapsto f$ from  $W^{1,2}_{0,\mathcal{S}}(X,\di,\mm)$ to  $W^{1,2}(X,\di,\mm)$ is an isometric immersion. Then it is enough to show that, for each $f\in W^{1,2}(X,\di,\mm)$ there exists a sequence $(\varphi_{k})$ of smooth functions with compact support in $X\setminus \mathcal{S}$ such that $\varphi_{k}\to f$ strongly in $W^{1,2}$. We prove the statement by subsequent approximations and conclude by a diagonal argument.

\textbf{Claim 1}. Let $f \in W^{1,2}(X,\di,\mm)$ and consider the sequence of truncations
\begin{equation*}
f_{n}(x)=
\begin{cases}
f(x)  &\text{ if } |f(x)|\leqslant n\\
n & \text{ if } f(x)\geqslant n \\
-n  &\text{ if } f(x)\leqslant -n\,\,.
\end{cases}
\end{equation*}
Then $f_{n}\to f$ strongly in $W^{1,2}$.
\\The claim follows directly by dominated convergence theorem.
\smallskip

Since by Proposition \ref{Prop:SingPolar} the singular set $\mathcal{S}$ is polar, then there exists a sequence of Lipschitz functions $(\psi_{k})$ with values in $[0,1]$, equal to $1$ on a neighbourhood of $\mathcal{S}$ and such that $\|\psi_{k}\|_{W^{1,2}}\to 0$. Up to extracting a subsequence not relabeled, we can also assume $\psi_{k} \to 0$ $\mm$-a.e.

\textbf{Claim 2}. Let $f \in W^{1,2}(X,\di,\mm)\cap L^{\infty}(X,\mm)$. Then  $(1-\psi_{k}) f$ is a $W^{1,2}(X,\di,\mm)$ function supported in the regular part  $X\setminus \mathcal{S}$, for each $k\in \N$; moreover, the sequence  $((1-\psi_{k}) f)_{k}$ converges to $f$ strongly in $W^{1,2}$.
\\This last claim is equivalent to show  that $\|\psi_{k}f\|_{W^{1,2}}\to 0$ as $k\to \infty$. Since $|\psi_{k} f|\leqslant f$ and $\psi_{k} f\to 0$ a.e., by dominated convergence theorem it follows that $\|\psi_{k}f\|_{L^{2}}\to 0$. It is thus sufficient to show that $\|\nabla(\psi_{k}f)\|_{L^{2}}\to 0$. We have
$$
\int_{X} | \nabla(\psi_{k}f)|^{2} \dd\mm\leqslant 2  \int_{X} | \nabla \psi_{k}|^{2} f^{2} \dd\mm + 2  \int_{X} | \psi_{k}|^{2} |\nabla f|^{2} \dd\mm\,.
$$
 Since $f$ is bounded and $\|\psi_{k}\|_{W^{1,2}}\to 0$, then the first  integral in the right hand side converges to zero as $k\to \infty$.
The second integral in the right hand side converges to zero as $k\to \infty$ by dominated convergence theorem, since $| \psi_{k}|^{2} |\nabla f|^{2}\leqslant |\nabla f|^{2}\in L^1(X,\mm)$. 
\smallskip

\textbf{Claim 3}. If $\phi \in W^{1,2}(X,\di,\mm)$ has support contained in the regular part  $X\setminus \mathcal{S}$, then there exists a sequence $(\phi_{j})$ of smooth functions with compact support in $X\setminus \mathcal{S}$ such that $\phi_{j}\to \phi$ strongly in $W^{1,2}$.
\\This last claim is completely standard and can achieved by using partition of unity to localise in coordinate charts and then use approximation by convolution in each chart to obtain smooth functions; finally multiplying by smooth cut-off functions with compact support in $X\setminus \mathcal{S}$ gives the desired approximation $(\phi_{j})$.
\smallskip

We can now combine the three claims above to conclude. Let $f \in W^{1,2}(X,\di,\mm)$ and fix $\epsilon>0$. We will construct $\tilde{f}$ smooth with compact support in $X\setminus \mathcal{S}$ such that
\begin{equation}\label{eq:f-tildef}
 \|f- \tilde{f}\|_{W^{1,2}}\leqslant \epsilon\, .
 \end{equation}
By claim 1, there exists $f_{n}\in W^{1,2}(X,\di,\mm)\cap L^{\infty}(X,\mm)$ such that 
\begin{equation}\label{eq:f-fn}
  \|f-f_{n}\|_{W^{1,2}}\leqslant \epsilon/3\, .
\end{equation}
 By claim 2, there exists $\phi_{k,n}= (1-\psi_{k}) f_{n}\in W^{1,2}(X,\di,\mm)$ with compact support in  the regular part  $X\setminus \mathcal{S}$  such that
 \begin{equation}\label{eq:fn-phikn}
 \|f_{n}- \phi_{k,n}\|_{W^{1,2}}\leqslant \epsilon/3\, .
 \end{equation}
 Finally, by claim 3, there exists $\tilde{f}$ smooth with compact support in $X\setminus \mathcal{S}$ such that 
  \begin{equation}\label{eq:phikn-tif}
\|\phi_{k,n}-\tilde{f}\|_{W^{1,2}}\leqslant \epsilon/3\, . 
 \end{equation}
 The combination of \eqref{eq:f-fn},  \eqref{eq:fn-phikn}, \eqref{eq:phikn-tif} gives \eqref{eq:f-tildef} by triangle inequality.
\end{proof}

\begin{remark}\label{rem:W12-polar}
 It is a general fact (well known to experts) for metric measure spaces that if $\mathcal{S}\subset X$ is polar than $W^{1,2}(X,\sfd,\mm)$ coincides with the closure in $W^{1,2}$-topology of the set of $W^{1,2}$-functions with support contained in $X\setminus \mathcal{S}$. The proof in the general case can be obtained along the lines of the proof of Proposition \ref{prop:W12=W120}.
\end{remark}

As observed above, if  $(X,\di,\mm)$ is an asymptotically D$p$-brane metric measure space, then the Sobolev space $W^{1,2}(X,\sfd,\mm)$ is a Hilbert space and we are in the framework described in Sec.\;\ref{sec:spectral}.

Given a closed subset $\mathcal{S}\subset X$, the \emph{Laplacian with Dirichlet boundary conditions on $\mathcal{S}$} is the analog of the construction performed in Sec.\;\ref{sec: PEMS} for the definition of $\Delta$, replacing $W^{1,2}(X,\sfd,\mm)$ by  $W^{1,2}_{0,\mathcal{S}}(X,\di,\mm)$. We denote this Dirichlet Laplacian by $\Delta_0$ and view it as an operator in its associated domain $D(\Delta_0)$, in the sense specified in Sec.\;\ref{sec:spectral}.

\begin{remark}[The spectrum of the Dirichlet Laplacian is always contained in the spectrum of the Laplacian]
Let  $(X,\sfd,\mm)$ be a metric measure space such that $W^{1,2}(X,\sfd,\mm)$ is a Hilbert space and let $\mathcal{S}\subset X$ be a closed subset. In this general situation, where smooth functions are not necessarily at disposal, one can define $W^{1,2}_{0,\mathcal{S}}(X,\di,\mm)$ to be the closure in $W^{1,2}$-topology of the set of $W^{1,2}$-functions in $X$ with essential support\footnote{Recall that for a measurable function $f$ defined on $(X,\mm)$ the essential support is the smallest closed subset $E\subset X$ such that $f(x)=0$ $\mm$-a.e.~outside $E$.} contained in $X\setminus \mathcal{S}$.  Denote by $\Delta$ the Laplacian of $(X, \sfd, \mm)$ and let $\Delta_{0}$ be the Laplacian on $X\setminus \mathcal{S}$ with Dirichlet boundary conditions on $\mathcal{S}$. 
 By the simple fact that  $W^{1,2}_{0,\mathcal{S}}(X,\di,\mm)$ can be seen as a closed sub-space of $W^{1,2}(X,\sfd,\mm)$, it follows that  $\sigma(\Delta_0)\subset \sigma(\Delta)$; moreover, if $f$ is an eigenfunction with eigenvalue $\lambda$ of $\Delta_0$ then  $f$ is an eigenfunction with eigenvalue $\lambda$ of $\Delta$.
 \end{remark}

In the case of D$p$-brane m.m.s., since by Prop.~\ref{prop:W12=W120} we know that $W^{1,2}_{0,\mathcal{S}}(X,\di,\mm)$ coincides with $W^{1,2}(X,\sfd,\mm)$, the  Laplacian with Dirichlet boundary conditions on $\mathcal{S}$ coincides with the Laplacian of $(X,\sfd,\mm)$ as defined in Def.~\ref{def:Laplacian}. We therefore obtain the following corollary.

\begin{corollary}\label{Cor:Delta=Delta0}
Let  $(X, \sfd, \mm)$ be an asymptotically D$p$-brane metric measure space satisfying the assumptions of Proposition \ref{Prop:SingPolar}. Let $\mathcal{S}$ be the singular set of $X$. Let $\Delta$ be the Laplacian of $(X, \sfd, \mm)$ and let $\Delta_{0}$ be the Laplacian on $X\setminus \mathcal{S}$ with Dirichlet boundary conditions on $\mathcal{S}$. 

Then $\Delta$ and $\Delta_{0}$ have exactly the same spectral properties, i.e.~$\rho(\Delta)=\rho(\Delta_{0}), \sigma_d(\Delta)=\sigma_d(\Delta_{0}),\sigma_{ess}(\Delta)=\sigma_{ess}(\Delta_{0})$.
 
In particular, in the variational characterization  \eqref{eq:defeigk} of $\lambda_{k}$, one can assume  $V_{k}$  to be a $k$-dimensional subspace of the vector space of smooth functions with compact support contained in the regular part $X\setminus \mathcal{S}$.
\end{corollary}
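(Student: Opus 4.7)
The plan is to observe that the corollary reduces to Proposition \ref{prop:W12=W120} together with the standard fact that, in an infinitesimally Hilbertian setting, a self-adjoint nonnegative Laplacian is completely determined by the Cheeger energy viewed on its form domain. Since $(X,\sfd,\mm)$ is an asymptotically D$p$-brane metric measure space, by \cite[Prop.~6.4]{deluca-deponti-mondino-t-entropy} the space $W^{1,2}(X,\sfd,\mm)$ is Hilbert, so $(X,\sfd,\mm)$ is infinitesimally Hilbertian and $\Delta$ is a densely defined, nonnegative, self-adjoint operator on $L^{2}(X,\mm)$ generated by the closed quadratic form $\Ch$ with form domain $W^{1,2}(X,\sfd,\mm)$. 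By construction $\Delta_{0}$ is the self-adjoint operator obtained by exactly the same procedure, but with the form domain $W^{1,2}_{0,\mathcal{S}}(X,\di,\mm)$ in place of $W^{1,2}(X,\sfd,\mm)$.

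The first main step is then immediate from Proposition \ref{prop:W12=W120}: that proposition tells us that the identity map is an isomorphism of Hilbert spaces between $W^{1,2}_{0,\mathcal{S}}(X,\di,\mm)$ and $W^{1,2}(X,\sfd,\mm)$, so the two form domains coincide as subsets of $L^{2}(X,\mm)$, they carry the same norm, and the Cheeger energies defined on them agree. By the one-to-one correspondence between closed, densely defined, nonnegative symmetric quadratic forms and nonnegative self-adjoint operators, this forces $\Delta=\Delta_{0}$ as unbounded operators, with the same domain $D(\Delta)=D(\Delta_{0})$. From the equality of the operators, the coincidence of the resolvent sets and of all spectral components (in particular $\rho(\Delta)=\rho(\Delta_{0})$, $\sigma_{d}(\Delta)=\sigma_{d}(\Delta_{0})$, $\sigma_{\mathrm{ess}}(\Delta)=\sigma_{\mathrm{ess}}(\Delta_{0})$) follows.

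For the variational characterization in \eqref{eq:defeigk}, I would argue by a density and small-perturbation argument. Fix any $(k+1)$-dimensional subspace $V_{k+1}\subset W^{1,2}(X,\sfd,\mm)$ and choose a basis $\{f_{1},\ldots,f_{k+1}\}$. By Proposition \ref{prop:W12=W120}, for each $\varepsilon>0$ there exist smooth functions $f_{i,\varepsilon}$ compactly supported in $X\setminus\mathcal{S}$ with $\|f_{i}-f_{i,\varepsilon}\|_{W^{1,2}}\leqslant \varepsilon$. For $\varepsilon$ small enough the $\{f_{i,\varepsilon}\}$ remain linearly independent, hence span a $(k+1)$-dimensional subspace $V_{k+1,\varepsilon}$ inside the space of smooth compactly supported functions on $X\setminus\mathcal{S}$. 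The Rayleigh quotient $\mathcal{R}$ of \eqref{eq: Rayleigh} is continuous on $W^{1,2}(X,\sfd,\mm)\setminus\{f\colon \|f\|_{L^{2}}=0\}$, and since $V_{k+1}$ is finite-dimensional, its unit $L^{2}$-sphere is compact and stays uniformly away from zero in the approximation. Consequently $\max_{f\in V_{k+1,\varepsilon}\setminus\{0\}}\mathcal{R}(f)\to \max_{f\in V_{k+1}\setminus\{0\}}\mathcal{R}(f)$ as $\varepsilon\to 0$, so taking infima in \eqref{eq:defeigk} shows that restricting $V_{k+1}$ to subspaces of smooth functions with compact support in $X\setminus\mathcal{S}$ does not change the minimax value.

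The substantive content is already packaged in Proposition \ref{prop:W12=W120} (which in turn rests on the polarity of $\mathcal{S}$ proved in Proposition \ref{Prop:SingPolar}); with those in hand, the corollary is essentially a matter of unwinding definitions plus a standard finite-dimensional perturbation argument for the minimax. The only thing to be careful about in the last step is to preserve both linear independence of the basis and a uniform lower bound on the $L^{2}$-norm along the approximation — both are automatic for small enough $\varepsilon$ by finite-dimensionality, so no genuine obstacle is expected.
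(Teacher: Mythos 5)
Your proposal is correct and follows essentially the same approach as the paper: the paper's preceding paragraph already observes that, by Proposition~\ref{prop:W12=W120}, the Dirichlet form domain $W^{1,2}_{0,\mathcal{S}}$ coincides with $W^{1,2}$, so the two self-adjoint Laplacians coincide as operators (being generated by the same closed quadratic form), and the spectral identities follow immediately. Your additional finite-dimensional perturbation argument for the minimax characterization simply makes explicit the standard density fact that the paper leaves implicit; it is correct as stated.
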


\begin{remark}
	For some singularities we expect we should include also functions with Neumann boundary conditions; for example one can argue this for a single D6, using duality with M-theory \cite[Sec.~3.3]{passias-t}. However, when the singularity $\mathcal{S}$  is polar, the eigenvalue problems for $\Delta$ with Dirichlet boundary conditions and with Neumann boundary conditions on $\mathcal{S}$ are equivalent (in turn to the eigenvalue problem without boundary conditions. This is due to the fact that a polar set is ``invisible'' by $W^{1,2}$-functions). Indeed, it is possible to approximate arbitrarily well functions with Neumann boundary conditions with functions in $W^{1,2}_{0,\mathcal{S}}(X,\di,\mm)$, given Rem.~\ref{rem:W12-polar} and Prop.~\ref{prop:W12=W120}.
\end{remark}

We have found that we can retrieve the metric-measure Laplacian also by working on the space $X\setminus {\mathcal S}$ without singular locus, at least when the latter is polar. In particular, we already have a domain on which the Laplacian is self-adjoint: namely the finiteness domain 
\begin{equation}\label{eq:defDDelta}
D(\Delta):=\{f\in L^2(X,\mm) \colon \Delta f \in L^2(X,\mm)\}
\end{equation}
that we have introduced in Def.\;\ref{def:Laplacian}, endowed with the norm 
$$
\|f\|_{D(\Delta)}:= \left(\int_X |f|^2+ |\Delta f|^2 \dd \mm\right)^{1/2}.$$ In a sense there is no need to extend the domain, using the terminology of Sec.~\ref{sec:spectral}.

We can nevertheless explore alternatives, and a first possibility is inspired by the discussion of the self-adjoint extension of the Hamiltonian $-\partial^2_z-1/4z^2$ in Sec.~\ref{sub:pot}.\footnote{A recent study of the influence of the choice of domain on KK stability is in \cite{mourad-sagnotti-selfadjoint}. It would be interesting to revisit that model with our methods.} However, some complications appear from this perspective. For illustrative purposes, let us assume here that the singular set ${\mathcal S}$ reduces to one point $x_{0}$. Let $G\in L^{1}_{loc}(X)$ be the Green's function for the weighted Laplacian on the original space $X$, centred at $x_{0}$: $\Delta G = \delta_{x_{0}}$.  Working on $X\setminus\{x_{0}\}$, one might think $G$ becomes harmonic; thus if one manages to include $G$ in the domain of the Laplacian while keeping it self-adjoint, then $G$ would be an eigenfunction with zero eigenvalue. This idea however presents some challenges.
	
	First of all, while $G$ satisfies $\Delta G=0$ in a  \emph{point-wise} sense on $X\setminus\{x_{0}\}$, in the  \emph{distributional} sense   $G$ in fact still satisfies $\Delta G= \delta_{x_{0}}$. Of course this cannot be seen by testing against functions in $C^{\infty}_{c}(X\setminus \{x_{0}\})$, as they vanish at $x_{0}$. However  for every test function $f$ continuous on $X\setminus \{x_{0}\}$ with bounded support, with $ \Delta f\in L^\infty(X)$, and admitting a limit $\lim_{x\to x_{0}} f(x)$, it would hold
	$$ \lim_{x\to x_{0}} f(x)=\int_{X} G \Delta f \, \dd \mm = \int_{X\setminus\{x_{0}\}} G \Delta f \, \dd \mm =\int_{X\setminus\{x_{0}\}} f \Delta G \, \dd \mm;$$
	 the first equality comes by direct computation as by construction $G$ is the Green function,  the second uses the fact that $\{x_{0}\}$ has measure zero, and the last uses the assumption that $\Delta$ is self-adjoint with $G$ in its domain. 
	 
	 Even if  somehow one managed to impose that $G$ is in the domain of $\Delta$ with $\Delta G =0$ on $X\setminus\{x_{0}\}$, a second problem would appear. Now $\int_{X\setminus\{x_{0}\}} G \Delta G \, \dd \mm =0$, but on the other hand $\int_{X\setminus\{x_{0}\}} |\nabla G|^2 \dd \mm$ is clearly non-zero (and in fact diverges when $\{x_{0}\}$ has codimension $\geqslant 2$). This means that integration by parts would no longer be valid on the domain of $\Delta$, which in turn invalidates the variational approach to the Laplace spectrum, in terms of Rayleigh quotients \eqref{eq: Rayleigh}.
	 
	 A third challenge is that  $X\setminus\{x_{0}\}$ is not geodesically complete  (unless $x_0$ is at infinite distance). It is a classical result \cite{Gaffney, Roelcke} that geodesic completeness of a smooth Riemann manifold $(M,g)$ is equivalent to the essential self-adjontness of the Laplace Beltrami operator on $C^{\infty}_c(M)$; in turn, essential self-adjontness is a key assumption in spectral theory.

\begin{remark}
	 In case ${\mathcal S}$ has null $W^{2,2}$-capacity, we can also argue that the extension domain for $\Delta$ we have chosen in \eqref{eq:defDDelta}  is unique, in the following sense. Let $C^\infty_\mathrm{c}(X \setminus {\mathcal S})$ be the space of functions with compact support outside the singular set ${\mathcal S}$. 
		Suppose we want to obtain a Hilbert space $H$ in which  $C^\infty_\mathrm{c}(X \setminus {\mathcal S})$  is dense and such  that $\Delta: H\to L^2(X)$ is a densely defined, self-adjoint operator. Being self-adjoint, $\Delta$ is automatically closed and this condition forces to endow   $C^\infty_\mathrm{c}(X \setminus {\mathcal S})$ with the norm induced by the quadratic form ${\mathcal Q}(f):=\int (f^2 + |\Delta f|^2) \dd \mm$ (or an equivalent norm inducing the same topology). $H$ then coincides with the closure of   $C^\infty_\mathrm{c}(X \setminus {\mathcal S})$  in $D(\Delta)$,  which in turn coincides with $D(\Delta)$ as in \eqref{eq:defDDelta}  since by assumption ${\mathcal S}$ has null $W^{2,2}$-capacity. (For more details on $W^{2,2}$-capacity and on sufficient conditions for a set to have null $W^{2,2}$ capacity in a metric measure space, we refer to \cite{Hinz2023}.) Notice this is surely the case for  an asymptotically D$p$-brane metric measure space in the sense of Def.\;\ref{def: Dp-brane mms}, for $p=1,\ldots, 5$ (as the singular set is at $\infty$); for $p\geq 6$ we expect this is not true as the singular set has not enough co-dimension. (One would need co-dimension 4 in a suitable weighted Hausdorff sense; we do not delve in more details as we do not expect a positive result.) \end{remark}

\section{Gravity localization in string theory} % (fold)
\label{sec:st}
So far we have discussed general mechanisms for gravity localization in theories with extra dimensions. As we have seen, even when the internal warped volume is infinite, there can still be meaningful localization of gravity on a lower-dimensional subspace, such as in the KR mechanism reviewed in Sec.~\ref{sub:ads}.

In this section we consider realizations in string/M-theory of this mechanism, focusing on the AdS case. In Sec.~\ref{sub:genmetloc}, we review some bounds on the lowest KK masses $m_0\neq 0$ and $m_1$, coming from \cite{deluca-deponti-mondino-t}. We will prove a general result on absence of separation of scale for theories with massive gravitons in AdS when only energy sources that satisfy the Reduced Energy Condition are turned on in the background (and when there is an upper bound on the gradients of the warping). Our general theorems also allow to infer localization of gravity without computing the spectrum.

Localization can happen on a brane such as in \cite{verlinde-RS2,chan-paul-verlinde} or on a broader internal region, loosely referred to as ``thick brane''. In particular, in Sec.~\ref{sub:Riemann} we construct realizations of massive gravity in String/M-theory starting from solutions that contain Riemann surfaces, with or without supersymmetry. In Sec.~\ref{sub:BL} we study instead models with $\mathcal{N}$ = 4 supersymmetry in type IIB string theory.

When separation of scales is absent, knowledge of the eigenvalues allows to put a lower bound on localization of gravity of the scale of the four-dimensional cosmological constant.  Whether localization also holds at smaller scales depends on the behavior of the eigenfunctions, and we will show how for the models of Sec.~\ref{sub:BL} the situation might indeed be better.

\subsection{General method and a bound on scale separation}\label{sub:genmetloc}

Our goal is to find vacuum solutions (namely, spacetimes of the form (\ref{eq:vac})) with infinite warped volume, so that the massless graviton is not part of the spectrum, and where the first massive spin 2 field with mass $m_0$ is very light and separated from the rest of the tower starting with mass $m_1$. This will guarantee localization of gravity at least up to the scale $m_1$. 

As we illustrated with the five-dimensional models studied in Sec.~\ref{sub:ads}, we can obtain this hierarchy by appropriately tuning the Cheeger constants. Recall that the first two generalized Cheeger constants are given respectively by
\begin{subequations}\label{eq:h-inf}
\begin{equation}\label{eq:h0-inf}
	h_0 := \inf_{B_0} \frac{\text{Vol}_A(\partial B_0)}{\text{Vol}_A(B_0)}\, , 
\end{equation}
and
\begin{equation}\label{eq:h1-inf}
	h_1 := \inf_{B_0,B_1} \max\left\{\frac{\text{Vol}_A(\partial B_0)}{\text{Vol}_A(B_0)},\frac{\text{Vol}_A(\partial B_1)}{\text{Vol}_A(B_1)} \right\}\, ,
\end{equation}
\end{subequations}
where the infimum is taken with respect to all possible (disjoint) sets $B_0,B_1$ of finite volume and the subscript $A$ refers to the fact that volumes are weighted with $e^f = e^{(D-2)A}$. Using the results in \cite[Th.~4.2, 4.8]{deluca-deponti-mondino-t} and \cite[Th.~6.7, 6.8]{deluca-deponti-mondino-t-entropy}, we find that these constants control the first two eigenvalues of $\Delta_f$ as
\begin{equation}\label{eq:bound0InfVol}
	\frac14 h_0^2 \leqslant m_0^2 < \max\left\{\frac{21}{20} \sqrt{-K}h_0,\frac{22}{20}h_0^2 \right\}\, ,
\end{equation}
and
\begin{equation}\label{eq:bound1InfVol}
	C\cdot h_1^2 \leqslant m_1^2< \max\left\{-\frac{3528}{25}K,\frac{3696}{25}\sqrt{-K}h_1, \frac{3872}{25}h^2_1\right\}\, ,
\end{equation}
where $K\le 0$ is a lower bound on the $\left(\text{Ricci}_f^\infty\right)_{mn} := R_{mn}-\nabla_m\nabla_nf$ and $C>0$ is a universal constant. The bounds \eqref{eq:bound0InfVol}, \eqref{eq:bound1InfVol} are valid also in presence of some singularities: the lower bounds in the infinitesimally Hilbertian class of Sec.~\ref{sec: PEMS}, which includes all the physical spaces where the closure of the singular set has measure zero; the upper bounds in the smaller RCD class (recall footnote \ref{foot:CD} and discussion below), that includes D-branes \cite[Sec.~3]{deluca-deponti-mondino-t}.
Using \cite[Th.~4.4]{deluca-deponti-mondino-t} we can also directly relate the first two eigenvalues to each other, without any reference to the Cheeger constants:
\begin{equation}\label{eq: eigen0 mainbound K<0inf}
	m_1^2<\max\left\{-\frac{3528}{25}K,\frac{704}{5}m_0^2\right\}.
\end{equation}

We are interested in a limit in which $m_0 \ll 1$ with $m_1$ remaining finite, so that $m_1/m_0 \gg 1$ and the rest of the tower is separated from the almost massless graviton responsible for localizing gravity. From \eqref{eq:bound0InfVol}, \eqref{eq:bound1InfVol} we see that this is achieved in a limit in which $h_0\to 0$ while $h_1$ stays finite.  The estimate \eqref{eq: eigen0 mainbound K<0inf} implies that in this limit 
\begin{equation}\label{eq:inequalitym1K}
	m_1^2<-\frac{3528}{25}K\,.
\end{equation}
A general result on $K$ was proven in \cite{deluca-t-leaps}. On any reduction of any higher-dimensional gravitational theory that at low-energies reduces to Einstein gravity plus some matter content (encoded in a stress-energy tensor $T_{M N}$), the equations of motion impose
\begin{equation}
	(\text{Ricci}_f^\infty)_{mn}= \Lambda_4 g_{m n}-(D-2)\partial_m A\partial_n A+\frac{1}{2}\kappa^2\left(T_{m n}-\frac{1}{d} T^{(d)}\right)\,,
\end{equation}
where $m,n$ are internal directions, and $T^{(d)}$ denotes the trace of $T_{M N}$ along the $d$-dimensional vacuum. When $\left(T_{m n}-\frac{1}{d} T^{(d)}\right)\geqslant 0$, a condition named \emph{Reduced Energy Condition} (REC) in \cite{deluca-t-leaps}, it is then possible to express the lower bound $K$ on $\text{Ricci}_f^\infty$ uniquely in terms of $\Lambda_4$ and $|\dd A|$. In particular, the REC has been shown to hold for a variety of matter content, including scalar fields, $p$-form fluxes, higher dimensional cosmological constants, localized sources with positive tension and general scalar potentials. Further specializing to $\Lambda_4 < 0$, we thus have
\begin{equation}\label{eq:KAdS}
	K = -\left(|\Lambda_4| + (D-2) (\textrm{sup}|\dd A|)^2\right).
\end{equation}
We stress that this is true for any higher dimensional gravitational theory when only sources that satisfy the REC are turned on. In particular, this holds true for string theory solutions without O-planes nor quantum effects. In such situations we have the following
\begin{proposition}\label{prop:Sep}	
	In any AdS vacuum solution with infinite warped volume that satisfies the Reduced Energy Condition and that admits a very light spin 2 field of mass $m_0 \ll |\Lambda_4|$ in its spin 2 Kaluza--Klein tower, the second element of the tower is bounded by 
	\begin{equation}\label{eq:inequalitym1Lambda}
		\frac{m_1^2}{|\Lambda_4|}<\frac{3528}{25}\left(1+ (D-2) \frac{(\mathrm{sup}|\dd A|)^2}{|\Lambda_4|}\right)\,.
	\end{equation}
\end{proposition}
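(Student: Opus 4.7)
The plan is to derive the proposition as a direct corollary of the eigenvalue estimate \eqref{eq: eigen0 mainbound K<0inf} combined with the curvature bound \eqref{eq:KAdS} coming from the Reduced Energy Condition. There is no deep new input required; the content of the statement is to rearrange these two inputs into an inequality that involves only $|\Lambda_4|$ and $\sup|\dd A|$ on the right-hand side, under the smallness assumption on $m_0$.

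First I would apply \eqref{eq: eigen0 mainbound K<0inf}, which gives
\begin{equation}
m_1^2 < \max\left\{-\frac{3528}{25}K,\;\frac{704}{5}m_0^2\right\}.
\end{equation}
Next, since the vacuum is AdS with $\Lambda_4<0$ and the matter satisfies the REC, I invoke \eqref{eq:KAdS} to substitute
\begin{equation}
-K = |\Lambda_4| + (D-2)(\sup|\dd A|)^2.
\end{equation}
The third step is to observe that the hypothesis $m_0^2 \ll |\Lambda_4|$ forces the first entry of the max to dominate: indeed $\frac{704}{5}m_0^2 = \frac{3520}{25}m_0^2 \ll \frac{3520}{25}|\Lambda_4| \leqslant -\frac{3528}{25}K$. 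Hence the max is realized by $-\frac{3528}{25}K$, and plugging in the expression for $K$ and dividing by $|\Lambda_4|$ yields the desired inequality \eqref{eq:inequalitym1Lambda}.

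Strictly speaking the only issue to check carefully is that the applicability hypotheses of \eqref{eq: eigen0 mainbound K<0inf} and of the REC-based lower bound on $\mathrm{Ricci}_f^\infty$ are compatible with the class of vacuum solutions one wants to treat (smooth, or with singularities in the infinitesimally Hilbertian / RCD class as discussed after \eqref{eq:bound1InfVol}). Since the hypotheses are already packaged in the statement (``AdS vacuum satisfying REC'' together with the existence of a light spin-2 mode in the KK tower), this verification is immediate. Thus the main ``obstacle'', if any, is purely bookkeeping: making sure that in writing $-K = |\Lambda_4|+(D-2)(\sup|\dd A|)^2$ we are using exactly the form of the bound on the weighted Ricci that is valid under REC, as derived in \cite{deluca-t-leaps}, and that the smallness condition $m_0^2 \ll |\Lambda_4|$ is indeed sufficient to discard the $\frac{704}{5}m_0^2$ branch of the max. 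Both are transparent, so the proof is essentially two lines.
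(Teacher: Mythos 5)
Your argument is correct and is essentially the same two-step derivation the paper uses: combine the eigenvalue comparison \eqref{eq: eigen0 mainbound K<0inf} with the REC-derived curvature bound \eqref{eq:KAdS}, and use $m_0^2\ll|\Lambda_4|$ to discard the $\frac{704}{5}m_0^2$ branch of the max (your numerical check $\frac{704}{5}=\frac{3520}{25}<\frac{3528}{25}$ together with $-K\geqslant|\Lambda_4|$ is exactly what makes this branch subdominant). No gap.
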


Summarizing, when $|\dd A|$ is bounded, Prop.~\ref{prop:Sep} proves a bound on separation of scales, defined in this case as the hierarchy between $m_1$ and $|\Lambda_4|$. In particular, separation of scales is forbidden in compact spaces when $\textrm{sup}|\dd A|$ is not allowed to grow much higher than $|\Lambda_4|$. This condition is natural as seen from the fact that the solution has to be in the supergravity approximation. Indeed, generically, a very large $|\dd A|$ is expected to result in a very large (in absolute value) $D$-dimensional curvature.

\begin{remark}
	We have specialized Prop.~\ref{prop:Sep} to AdS, but a similar statement also holds for dS vacuum solutions. Indeed, while for compact internal spaces it is known that the REC implies a negative cosmological constant \cite{gibbons-nogo,dewit-smit-haridass,maldacena-nunez}, for more general warped products this is not true.\footnote{Simple examples with infinite warped volume can be obtained by regarding any AdS$_n$ compactification as a non-compact foliation of $dS_{n-1}$. For more elaborated constructions in 10/11 dimensional supergravities see \cite{gibbons-hull-ds}.} Allowing for $\Lambda_4 > 0$,  \eqref{eq:KAdS} changes by $|\Lambda_4| \to - \Lambda_4$ and simple modification of Prop.~\ref{prop:Sep} is obtained by removing the ``1'' in \eqref{eq:inequalitym1Lambda}.
	However, since now $K$ can change sign, it is possible to obtain stronger results for this case by using the theorems formulated for $K\geqslant 0$. More generally, allowing also for REC-violating sources, $K$ will have extra negative contributions from those. Terms with different signs can then compensate each other and we leave the exploration of these richer scenarios to future work. 
\end{remark}

From the point of view of gravity localization, we have obtained that looking at the eigenvalues alone can guarantee localization at most up to the scale of the synthetic Ricci lower bound $K= -\left(|\Lambda_4| + (D-2) (\textrm{sup}|\dd A|)^2\right)$. Whether actual localization can be pushed to higher energy scales depends on the relative concentration of eigenfunctions, as in Sec.~\ref{sub:ads}, and it is not readily visible from the eigenvalues alone.

In the next two subsections, we will evaluate these bounds on two different classes of String/M-theory AdS vacua showing how gravity is localized in such UV complete examples.

We conclude with a brief discussion of the continuous spectrum. Since we are considering noncompact ``internal'' spaces $X_n$, the spectrum is not guaranteed to be discrete. Some general methods to analyze this issue were presented in \cite{cianchi-mazya,wang-infinite}. For our purposes, we should make sure that the continuous spectrum is either absent, or starts at a large value.

Already when the weighted volume (and hence $m_4$) is finite, D$p$-branes with $p\le 5$ are at infinite distance; so in their presence discreteness is not guaranteed. We presented a first rough analysis in \cite[Sec.~4.2.2]{deluca-deponti-mondino-t}: near the sources, the ratios in (\ref{eq:h-inf}) get arbitrarily small for $p<5$, tend to $r_0^{-1}$ for $p=5$, and get large for $p>5$. This suggests that for $p<5$ the continuum is present and starts at zero, for $p=5$ it starts at $r_0^{-1}\sim l_s^{-1} g_s^{-1/2}$, and for $p>5$ it is absent. A more sophisticated analysis can be carried out using \cite[Th.~3.1]{cianchi-mazya}. This gives a characterization of spaces where the ordinary Laplace--Beltrami operator has a discrete spectrum, in terms of a certain ``isocapacitary'' function $\mu_{X_n}(s): [0,\infty)\to [0,\infty)$. We expect this theorem to still hold in the weighted case. We checked that its hypotheses fail near a D$p$-brane, $p\leqslant 5$, and hold for $p>5$.

In the models we are going to see, the weighted volume is infinite, and the analysis above is not enough: in other words, branes are not the only source of non-compactness. We will return to the issue in each case separately.

\subsection{Riemann surfaces} % (fold)
\label{sub:Riemann}
Examples of infinite-volume vacuum solutions that localize gravity in lower dimensions can be readily constructed starting from any compactification that includes a Riemann surface $\Sigma_g$, when the background fields are constant along $\Sigma_g$. Known four-dimensional examples include the AdS$_4$ solutions in type IIB string theory, that can be found by combining \cite{cvetic-lu-pope-F4,nunez-park-schvellinger-tran}, as spelled out in  \cite[Sec.~5.2]{legramandi-nunez}; and examples in massive type IIA \cite{nunez-park-schvellinger-tran,bah-passias-weck}.\footnote{These massive type IIA examples include the presence of O8-planes, which are outside of the $\mathsf{RCD}$ class but included in the class of infinitesimally Hilbertian spaces introduced in Sec.~\ref{sec: PEMS}. Thus, for these examples, only a subset of the eigenvalue bounds is currently known to hold. We refer the reader to \cite[Sec.~6.3]{deluca-deponti-mondino-t-entropy} for more details.} Examples in other dimensions include the AdS$_5$ constructions \cite{maldacena-nunez, bah-beem-bobev-wecht} in M-theory and \cite{afpt} in type IIA, or the AdS$_3$ vacua in type IIB of \cite{benini-bobev,couzens-macpherson-passias}.

In general, we are considering solutions where the internal unwarped metric is a fibration over a Riemann surface:
\begin{equation}
	\dd s^2_n = L_n^2\left(\dd s^2_{\Sigma_g} + g_{i j} (\dd y^i+A^i)(\dd y^j+A^j)\right)\,,
\end{equation}
where $\dd s^2_{\Sigma_g}$ is the metric on a negatively-curved Riemann surface of genus $g$ with Ricci scalar $-2$, and $A^i$, $i = 1,\dots n-2$ are a collection of one-forms on $\Sigma_g$. The weight function $e^f$ and the metric $g_{ij}$ are constant along $\Sigma_g$; $L_n$ is a constant that in these models is often of order $\sim 1/|\Lambda_4|$.

While the spin 2 fields are given by eigenfunctions of the weighted Laplacian $\Delta_f$, a subset of modes that are non-constant only along $\Sigma_g$ has masses dictated by the spectrum of the standard Laplacian on the Riemann surface. We can thus ask whether this portion of the spectrum can include a very light mode, specifically when the volume of $\Sigma_g$ is infinite so that no massless graviton is allowed, as we proved in Sec.~\ref{sec:const}.
\begin{wrapfigure}[14]{L}{7cm}
	\centering
		\includegraphics[width=7cm]{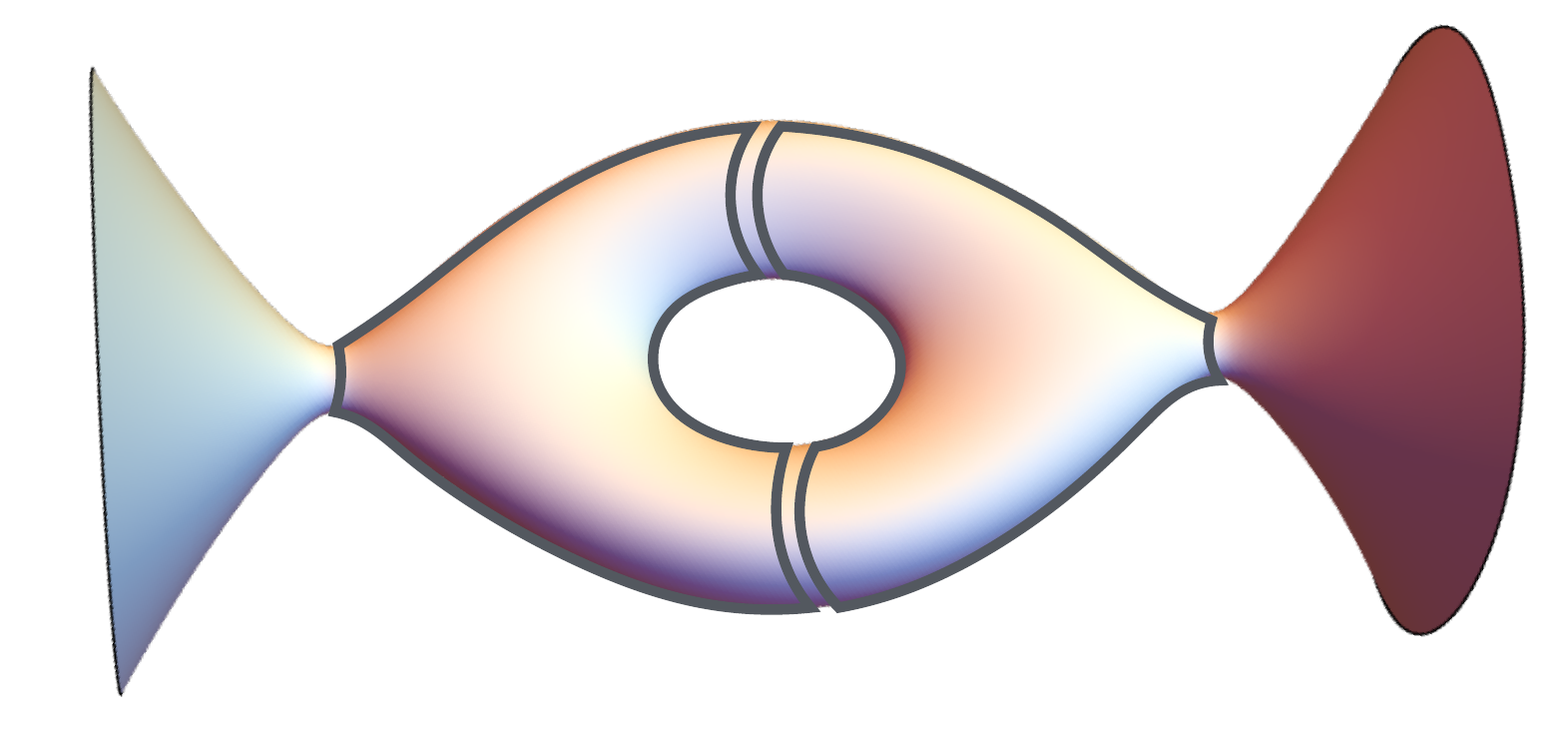}
	\caption{\small An infinite-area Riemann surface ($g = 1, n_f = 2$), with a single small eigenvalue. $h_0$ is small as can be seen by taking as $B_0$ the union of the two pair of pants; $h_1$ is not small since two of the three ends of the $B_i$ are not small.}
	\label{fig:torus-funnel}
\end{wrapfigure}

\noindent The answer is affirmative, and there is considerable freedom in tuning the geometry to achieve a hierarchy between the first mode and the rest of the tower. This phenomenon can be directly understood from the general bounds in terms of the Cheeger constants by using the equations \eqref{eq:bound0InfVol} and \eqref{eq:bound1InfVol}. To see this, we specialize those general inequalities to the spectrum of the pure Laplacian on $\Sigma_g$ by setting $A=f = 0$ and $K = -1$. 
In this case, there is a natural choice for how to construct the $B_i$, by using the pieces provided by the \emph{pair of pants} decomposition of $\Sigma_g$. Specifically, an infinite-area negatively-curved  $\Sigma_g$, with finite Euler characteristic and no cusps, can always be decomposed in a \emph{compact core} plus $n_f$ \emph{funnels} and then cut in $2g-2+n_f$ pair of paints.\footnote{We refer the reader to \cite[Sec.~6.3]{deluca-deponti-mondino-t} and references therein for more details on infinite area Riemann surfaces and their spectra.} Importantly, each of these pieces can have ends of arbitrary geodesic length. Thanks to this freedom, we can obtain arbitrarily small  $\frac{\text{Vol}(\partial B)}{\text{Vol}(B)}$ for each of the pair of pants in the decomposition, thus realizing arbitrarily small $h_i$ for $i<2g-2 +n_f$. In particular, we are free to tune $h_0$ to be very small while keeping $h_1$ fixed, thus realizing the mechanism described in Sec.~\ref{sub:genmetloc}.  An example with $g = 1, n_f = 2$, with a single light mode, is given in Fig.~\ref{fig:torus-funnel}. 

So far, we have discussed the part of the spectrum coming from modes only varying along $\Sigma_g$. While this is enough to ensure that a very light $m_0$ is part of the spectrum, we would like to ensure that $m_1$ does not get small, regardless of its origin. If the geometry doesn't have other small ``necks'', so that $h_1$ computed in the whole geometry is not small, then this is guaranteed by the first inequality in \eqref{eq:bound1InfVol}. It is easy to check this explicitly case by case for each example: the fiber metric $g_{ij}$ is that of a distorted sphere of cohomogeneity one, independent of $\Sigma_g$. For the original AdS$_5$ example of \cite{maldacena-nunez}, this check was performed in \cite[Sec.~6]{deluca-deponti-mondino-t}.

As we noted in our general discussion in section \ref{sub:genmetloc}, when $X_n$ is non-compact the spectrum is not guaranteed to be discrete. For the models of this section, it is indeed known that infinite-area negatively-curved hyperbolic surfaces have a continuous spectrum that in our normalization starts at $1/4L_n^2$.

While this shows that gravity in these models is localized up to a length scale $\sim 1/m_1$, as in the KR model the true scale at which gravity looks four-dimensional might be even smaller if the eigenfunctions are sufficiently localized. The reason this might work is the following. We expect the matter modes $\lambda$ of which a four-dimensional observer is made to be concentrated on the compact core of $\Sigma$.
The expression (\ref{eq:massive-V}) would now be replaced by $(G M_1 M_2/R) \sum_{k=0}^\infty \ee^{-m_k R} \langle\lambda, \psi_k \rangle^2$, where $\langle \lambda, \psi\rangle:= \int_{M_6} \sqrt{g_6}\ee^{8A} \lambda \psi$. Unfortunately finding $\lambda$ would require obtaining the rest of the KK tower; the relevant differential operators ${\mathcal O}$ on $M_6$ have not been worked out for this class of solutions. Qualitatively, one would expect  $\lambda$ to be concentrated near the bulb just as $\psi_0$ is, and hence $\langle\lambda, \psi_{k>0}\rangle := \epsilon$ to be small, since $\langle \psi_{k>0}, \psi_0 \rangle=0$. For two such particles, 
\begin{equation}\label{eq:V-hope}
	V\sim G M_1 M_2 \left(\frac1R + \frac{\epsilon L_4}{R^2} \right)\,;
\end{equation}
thus for such modes we would find localization at a lower scale, $R> \epsilon L_4$. It would be interesting to check explicitly this conjectural mechanism.

An alternative to this mechanism would be to include localized sources on $\Sigma_g$, such as the so-called punctures. Now four-dimensional matter might be localized on such punctures, and the eigenfunctions corresponding to $m_1$ and the higher modes might be suppressed away from the sources where the matter is localized. It would be interesting to explore this further.  

Finally, we notice that one might try to avoid the need to study the eigenfunctions to have localization up to shorter distances by trying to push directly $m_1$ to much larger scales. However Prop.~\ref{prop:Sep} forbids this if $\sup|\dd A |$ is of order 1. 
This is the case for the examples of solutions we mentioned above, and we expect it to be true in general for this class of solutions since the warping does not depend on $\Sigma_g$.

\subsection{Examples with ${\mathcal N}=4$ supersymmetry} % (fold)
\label{sub:BL}

There exists a class of models that shares many similarities with the AdS version of the KR model of Section \ref{sub:ads}. It is holographically dual to domain walls in ${\mathcal N}=4$ super-Yang--Mills \cite{dhoker-estes-gutperle}. Its potential use for gravity localization was explored in \cite{bachas-estes,assel-phd,bachas-lavdas2}. One expects many similar realizations in string theory of such domain walls, and more generally of conformal defects within a CFT.

We refer the reader to the papers above for a detailed introduction to these solutions; our notation is based on our earlier \cite[Sec.~5]{deluca-deponti-mondino-t}. The geometry is again a warped product
\begin{equation}\label{eq:wp}
	\dd s^2 = \ee^{2A}(\dd s^2_{\mathrm{AdS}_4} + \dd s^2_{M_6})\,.
\end{equation}
The internal space is a fibration  over $\mathbb{R}_x$ with coordinate $x$, whose generic fibers are topologically $S^5$. The latter has the metric of a ``join'', with an $S^2\times S^2$ fibred over an interval $y\in I_y=[0,\pi/2]$, each $S^2$ shrinking at one of its endpoints. The solution is completely determined by two harmonic functions $H_1$, $H_2$ on the strip $\mathbb{R}_x \times I_y$.
Morally the coordinate $x$ will play the role of $r$ in the five-dimensional KR model. It is possible to include arbitrary numbers of NS5- and D5-branes along an AdS$_4\times S^2$, respectively at $y=0$ and $\pi/2$. The overall topology of $M_6$ was initially taken to be $\mathbb{R} \times S^5$, but appropriate choices for $H_1$, $H_2$ were later found to make the $S^5$ shrink at $x\to -\infty$, leading to topology $M_6\cong\mathbb{R}^6$ \cite{assel-bachas-estes-gomis,aharony-berdichevsky-berkooz}; or at both $x\to \pm \infty$, leading to $M_6 \cong S^6$ \cite{assel-bachas-estes-gomis}. A further possibility is to identify $\mathbb{R}$ periodically, leading to $M_6\cong S^1 \times S^5$ \cite{assel-bachas-estes-gomis2}.

\begin{wrapfigure}[13]{R}{6cm}
	\centering
		\includegraphics[width=5.5cm]{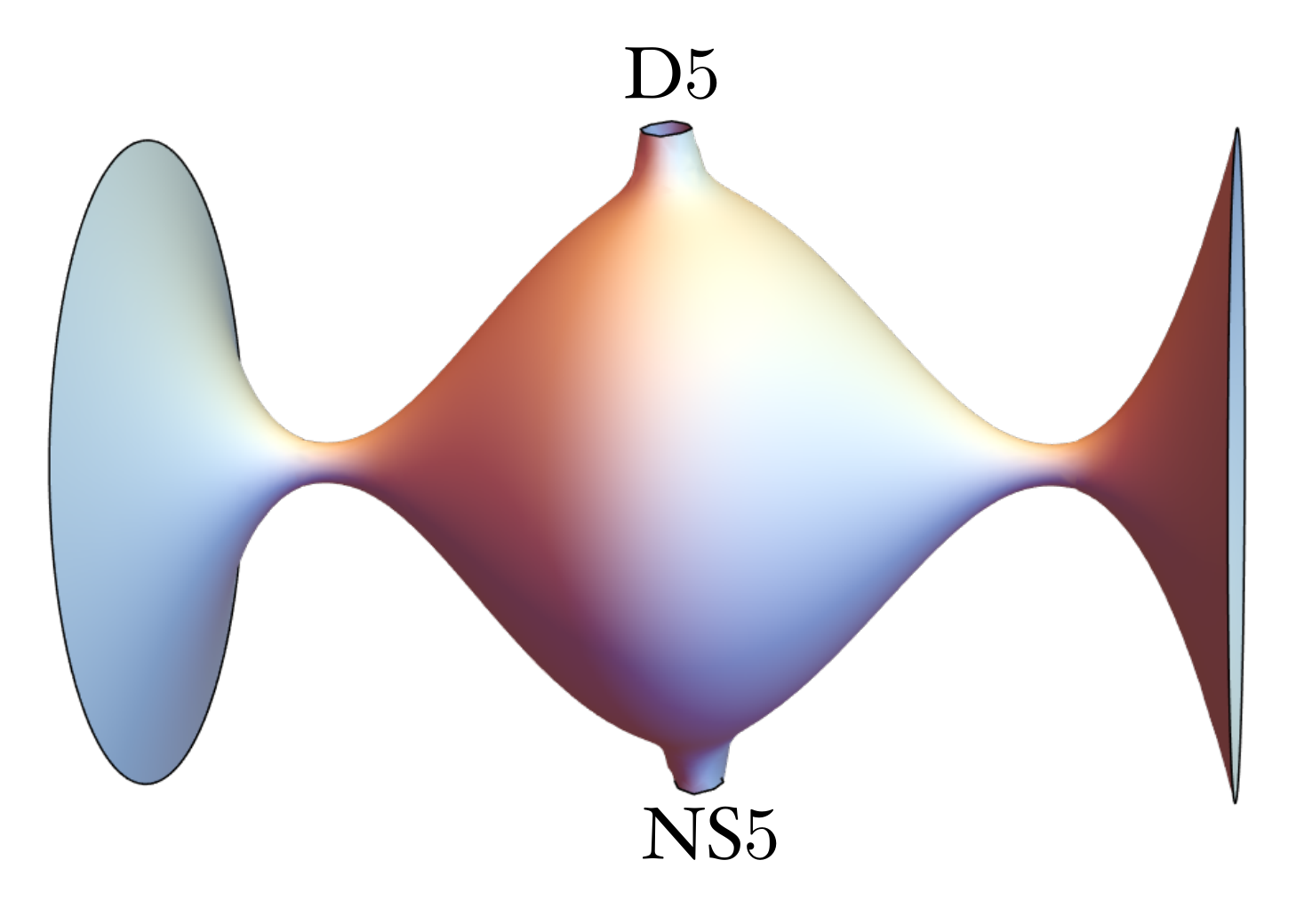}
	\caption{\small A cartoon of the internal space with two throats, similar to the KR model.}
	\label{fig:bl2}
\end{wrapfigure}

The simplest case, which includes no branes at all, was considered in \cite{bachas-estes}. Here the warping function is flat in a region $\{ x\in(-\delta\phi/2,+\delta\phi/2)\}$, and grows exponentially for larger $|x|$. (As its name suggests, $\delta \phi$ is related to the difference between the limits $\lim_{x\to \pm \infty} \phi(x)$.) In particular it does not have the peak near the origin that we saw in the KR model. A numerical analysis reveals that all the masses in the KK tower go to zero simultaneously as $\delta \phi\to \infty$, thus showing no sign of localization. We will see soon how to reproduce this in terms of Cheeger constants.

More general models with NS5- and D5-branes are much more promising \cite{assel-phd,bachas-estes,bachas-lavdas2}.
A cartoon is shown in Fig.~\ref{fig:bl2}: the space has a central ``bulb'' of volume $V$ and two tubes of length $\ell$ ending with two non-compact regions where the $S^5$ grows exponentially. This is similar to the AdS KR model: the central peak is replaced by the bulb, and the tubes serve to suppress higher-dimensional KK modes.

A simple version of this model was considered in \cite[App.~E]{assel-phd}, which we slightly generalize here. The harmonic functions read
\begin{equation}\label{eq:H1H2-2t}
	\begin{split}
	&H_1 = 2l_s^2\mathrm{Re}\left[-\ii a \sinh z- (N/4) \log\tanh\left(\frac{\ii\pi}4 -\frac z2\right)\right]\,,\\
	&H_2 = 2 l_s^2\mathrm{Re}\left[\hat a \cosh z - (\hat N/4)\log \tanh\left(\frac z2\right) \right]\,,
	\end{split}
\end{equation}
with $z = x+ \ii y$.
We take $a, \hat a \ll N, \hat N$. There are $N$ D5-branes and $\hat N$ NS5-branes at $x=0$. In the $x\to \pm \infty$ regions the metric is asymptotically AdS$_5\times S^5$, where both factors have curvature length $L_5$ given by $(L_5/l_s)^4 = 4\pi n \sim 8(a \hat N + \hat a N)$; $n$ is the $F_5$ flux quantum at $x\to +\infty$. The weighted volume of the bulb (the integral of $\ee^{8A}$ on it) is $V\sim  v (N \hat N)^2 l_s^8/L_4^2$, where $v\sim .05$ is a numerical factor and as usual $\Lambda_4 = - 3/L_4^2$.\footnote{\label{foot:L4}In the original papers, $L_4$ is set to $l_s$ by using invariance of (\ref{eq:wp}) under $A\to A+A_0$, $L_4\to \ee^{A_0} L_4$, $\dd s^2_{M_6}\to \ee^{-2A_0}\dd s^2_{M_6}$. Here we have restored it as an independent variable, but it is important to keep in mind that it is an ambiguous reference quantity; all physically meaningful quantities are measured with respect to it. For example, a product $mL_4$ is unambiguous.}
The tubes are in the regions $(\delta x -\frac12 \ell, \delta x +\frac12\ell)$ and $(-\delta x -\frac12 \ell, -\delta x +\frac12\ell)$, where
\begin{equation}\label{eq:dxl-2t}
	\delta x= \frac14\log\frac{N \hat N}{a \hat a} \, ,\qquad \ell =\frac12\left|\log\frac{a \hat N}{\hat a N}\right|\,.
\end{equation}
As usual in these models (for example the Janus case mentioned earlier), $\frac12\log\frac{a \hat N}{\hat a N}$ also happens to be equal to the difference $\delta \phi$ between the values of the dilaton at its two ends. 

A variant of this model with a single tube\footnote{This can be obtained for example from the models considered in \cite[Sec.~5.4]{deluca-deponti-mondino-t} by taking the limit $N_\mathrm{R}$, $\hat N_\mathrm{R}\to \infty$, keeping $\hat N_\mathrm{R}/N_\mathrm{R}$ constant.} can be thought of as a version of the KR model where the brane at the center is in fact a spacetime boundary. The models in \cite{dhoker-estes-gutperle2} have a larger number of tubes, and don't have a direct analogue in five dimensions.

Before we discuss the mass spectrum in these models, let us comment on the presence of a continuous part of the spectrum. As we anticipated in the general discussion in Section \ref{sub:genmetloc}, the presence of D5s and NS5s implies that a continuous spectrum does exist. However, the hypotheses in \cite[Th.~3.1]{cianchi-mazya} fail near the sources, but not for large distances.\footnote{The function $\mu(s): [0,\infty)\to [0,\infty)$ is the infimum of a certain quantity over sets of measure $\geqslant s$; $s/\mu(s)$ should go to zero as $s$ goes to both $0$ and $\infty$. For the spaces in this subsection, the limit fails as $s\to 0$ because of sets supported near the branes. For $s\to \infty$ we are checking a property that regards very large sets, involving the $x\to \pm \infty$ regions; this condition is satisfied, as can be shown using \cite[(3.7)]{cianchi-mazya}.} This suggests that the continuum starts at $r_0^{-1}\sim l_s^{-1} g_s^{-1/2}$, but not below. This is also supported by a numerical analysis. We will return to this later in this section.

The lightest spin-two mass $m_0$ was evaluated numerically in \cite[App.~E]{assel-phd} in the $a=\hat a$, $N= \hat N$ case, where it was found $m_0 L_4\sim O(a/N)$. It can be estimated more generally and precisely with the methods of \cite{bachas-lavdas,bachas-lavdas2}, which give
\begin{equation}\label{eq:m0-BL2}
	m_0^2 L_4^2\sim \frac{2^7 3 \pi^3 L_5^8}{V L_4^2} J(\ell)=\frac{2^{11}3 \pi^5}{v}\frac{n^2}{N^2 \hat N^2 } J(\ell)\,,
\end{equation}
where $J(\ell)= ( \ell \coth^3 \ell - \coth^2 \ell)^{-1}$ is of order one at $\ell=0$, and $O(\ell)^{-1}$ as $\ell\to \infty$.

The larger masses $m_{k>0}^2$ were not estimated in \cite{bachas-lavdas2}, but we can again do so using the Cheeger constants.
As a warm-up we evaluate $h_0$. We take $B_0$ of the form $(-x_0,x_0)$.  \eqref{eq:h0-inf} gives the estimate
\begin{equation}\label{eq:h0-BL0}
	h_0 \sim \inf_{x> \delta x -\ell/2}
	\frac1{\sqrt2 L_4}\frac{g(x - \delta x)^{3/2}}
	{\frac{4\pi^5}v (\frac{N \hat N}n)^2 + \int_{-\ell/2}^{x} \dd x' g(x'- \delta x)} \, ,\qquad g(x):= 1+ \frac{\cosh 2x}{\cosh \ell}\,,
\end{equation}
of the same form as in \cite[(5.22)]{deluca-deponti-mondino-t} for a similar (but compact) model in the same class. Recall that there is a tube of length $\ell$ and centered at $\delta x$. 
The requirement $x> \delta x-\ell/2$ is so that we stay away from the central bulb, where the back-reaction of the central NS5- and D5-branes is important; here the quantity to minimize is large and more complicated, although in principle computable exactly from (\ref{eq:H1H2-2t}). The function $g(x)$ is flat in the region $(-\ell/2, \ell/2)$, and grows exponentially for $x > \ell$; at $x\to \infty$, $M_6$ becomes non-compact. In finding the infimum in (\ref{eq:h0-BL0}), one wants to make $x$ grow almost all the way to $\delta x +\ell/2$, because the numerator stays almost constant while the denominator grows like $C+ x$. For $x> \delta x +\ell/2$, the numerator grows faster than the denominator. Recalling $L_5\sim n^{1/4}$, $V_1\sim (N \hat N)^2$, we get
\begin{equation}\label{eq:h0-BL}
	h_0 \sim \frac1{\sqrt2 L_4}\left(\frac{4\pi^5}v \frac{N^2 \hat N^2}{n^2} + \ell\right)^{-1}\,.
\end{equation}
A more precise estimate can be obtained as in \cite{deluca-deponti-mondino-t} using Lambert functions. Unfortunately, as also remarked there, the general upper bounds
in \eqref{eq:bound0InfVol} are not useful for non-compact models of this type, but the lower bound $m_0 \geqslant h_0/2$  does apply, and is indeed compatible with it.

We can now try to give a similar estimate for the second-lightest mass $m_1$, by using \eqref{eq:bound1InfVol}. To do so, we need to estimate $h_1$ from \eqref{eq:h1-inf} and we take $B_0=(-\delta x + \ell/2,x_0)$, $B_1=(x_0,x_1)$. It is again strongly favored to take both $x_0$, $x_1$ in the tube region $(-\ell/2 + \delta x, \ell/2 + \delta x)$. We have  
\begin{equation}\label{eq:h1-BL}
	\frac{\text{Vol}_A(\partial B_0)}{\text{Vol}_A(B_0)}\sim \frac2{\frac{8\pi^5}v (\frac{N \hat N}n)^2+ x_0-\delta x + \frac32\ell} \, ,\qquad \frac{\text{Vol}_A(\partial B_1)}{\text{Vol}_A(B_1)}\sim \frac1{x_1-x_0}\,.
\end{equation}

The largest of these two is always $\text{Vol}_A(\partial B_1)/\text{Vol}_A(B_1)$. So we need to take the infimum of this, which is obtained by making $x_1-x_0=\ell$, or in other words by making $B_1$ extend as long as the tube. So we conclude $h_1 \sim \frac1{\ell}$.

We would like now to use the lower bound in \eqref{eq:bound1InfVol}. Unfortunately, this comes from theorem \cite[Th.~6.8]{deluca-deponti-mondino-t-entropy}, which needs a hypothesis on the growth of the volumes of balls that fails in the asymptotic regions $x\to \pm \infty$. (An earlier version \cite[Th.~4.9]{deluca-deponti-mondino-t} requires a lower Ricci bound that also fails in our case.) To overcome this difficulty, we notice that for large $|x|$ the metric behaves as $\sim \ee^{-2x}(\dd s^2_{\mathrm{AdS}_4}+ \dd x^2 + \dd s^2_{S^5})$, and the normalizable eigenfunctions decay exponentially as $\psi(x,y)\sim e^{(-2-\sqrt{4+m^2})|x|}f_m(y)$, with $\Delta_{S^5} f_m = m^2 f_m$. We can thus approximate very well the full problem with the Neumann problem obtained by cutting off the geometry to a region $M_{\bar x}:= M_6 \cap[-\bar x,\bar x]$ for a large enough $\bar x$. We have also checked numerically that above a certain threshold the spectrum is insensitive to the cutoff $\bar x$.

Working in $M_{\bar x}$ gives us that as anticipated the spectrum is discrete, at least below the D5 threshold $l_s^{-1} g_s^{-1/2}$. Moreover, the cut off model now satisfies the hypotheses in \cite[Th.~6.8]{deluca-deponti-mondino-t-entropy}. But there is a last subtlety for us to tackle: the Neumann eigenvalues on the cut off model include two additional light modes, which are spurious in that the corresponding eigenfunctions go to zero when the cutoff is removed. To see these spurious eigenvalues $\bar m_0$, $\bar m_1$, let us see how our previous discussion of the Cheeger constants is modified by the cutoff. (We denote quantities relative to the cut off model by a bar.) First, $\bar h_0$ is realized by taking $\bar B_0=M_{\bar x}$. Since $\partial \bar B_0\subset \partial M_{\bar x}$, $\mathrm{Vol}_A(\partial \bar B_0)=0$ and $\bar h_0=0$, which by (\ref{eq:h0-inf}) implies $\bar m_0=0$. Next, $\bar h_1$ is realized by taking $\bar B_0=(-\bar x, - \delta x -\ell/2)$ and $\bar B_1=(\delta x + \ell/2 , \bar x)$, namely the regions beyond the two tubes in Fig.~\ref{fig:bl2}. When $\bar x $ gets large, $\mathrm{Vol}_A(\partial \bar B_i)$ remain fixed while $\mathrm{Vol}_A(\bar B_i)$ diverges; in agreement with (\ref{eq:bound1InfVol}), $\bar m_1\to 0$. The next two eigenvalues are not spurious and survive in the non-compact limit $\bar x\to0$; the collections of $B_i$ are obtained by taking those we discussed around (\ref{eq:h0-BL}) and (\ref{eq:h1-BL}), and adding to them the $\bar B_0$, $\bar B_1$ we just saw. In particular, the $m_1$ for the non-compact model is the limit of $\bar m_3$ for the cutoff model; these are related to $h_1$ and $\bar h_3$ respectively. We can now use \cite[Th.~6.8]{deluca-deponti-mondino-t-entropy} for $k=3$ to obtain
\begin{equation}\label{eq:m1-BL2}
	m_1^2 L_4^2> \frac{c}{\ell^2}
\end{equation}
for a certain universal constant $c$.

When $\ell$ is large, there is no separation between $m_1$ and $m_0$. This is similar to what was found in the aforementioned pure Janus model in \cite{bachas-estes}. In that case, the bulb is absent, and the tubes join to form a single one. The computation of the Cheeger constants is even clearer than in the earlier case: $h_0$ is realized as $\text{Vol}_A(\partial B)/\text{Vol}_A(B)$ for a  $B$ that fills the entire tube; $h_1$ by considering $B_0$, $B_1$ that each take half the tube. So $h_0\sim 1/ \ell$, $h_1 \sim 1/ 2\ell$. (A generalization to a higher number of $B_i$ exists, and gives $h_k\sim 1/(k+1) \delta \phi$ for similar reasons.) 

However, the tube can also be made short: looking at (\ref{eq:dxl-2t}), this happens for example when $a=\hat a$, $N=\hat N$ (or more generally when they are almost equal). Now $h_1$ is no longer small. We expect it to be $O(1)$, and $m_1\sim 1/L_4$. But $m_0$ can be made small by taking
\begin{equation}
	n\ll N \hat N= N^2.
\end{equation}
So in this regime, to which we restrict our attention from now on, we do have a hierarchy:
\begin{equation}
	\frac{m_0}{m_1}\sim \frac{n}{N^2} \ll 1 \,.
\end{equation}
This indicates that we can have gravity localization in the two-throat model.

Similar to our discussion of the KR model, by itself this hierarchy only implies that gravity behaves in a four-dimensional way for scales $L_4 \ll R \ll (N^2/n) L_4$. To see if this persists below the cosmological scale, we need information about the wave-functions themselves. In the RS model, the suppression of $\psi_{k>0}$ on the brane was responsible for the absence of a $R^{-2}$ term in the potential, as discussed below (\ref{eq:massive-V}); in the KR model, it makes it negligible, so that localization persists up to $R \sim L_5$, well below $L_4$.

Our mathematical theorems do not put constraints on the $\psi_k$, but we will try to proceed anyway. In this class of backgrounds, there are two types of modes to consider: 
\begin{itemize}
	\item For matter modes localized on the NS5-branes, numerically we find $\psi_k(x=0,y=0)^2\sim N^{-4}$ for the first few $k >0$. However, we find the same behavior with $N$ also for $k = 0$, in contrast to the KR case, for which we observed above (\ref{eq:V-KR}) that higher eigenfunctions $\psi_{k>0}$ are suppressed at the origin relative to $\psi_0$. Now (\ref{eq:massive-V}) gives, with a similar logic to that in footnote \ref{foot:quickV}:
\begin{equation}
	V\sim G M_1 M_2 \left(\frac1R + \frac{L_4}{R^2} \right)\,.
\end{equation}
Thus for these modes localization only really works up to the cosmological scale, $R>L_4$.\footnote{Actually for $R>L_4$ (4.21) will change, with a large contribution from the gravitational potential of AdS itself. The overall leading behavior will still appear four-dimensional.} The result is similar for D5-branes.
	\item Further four-dimensional matter particles originate from closed-string modes whose KK wave-function $\lambda$ is concentrated on the bulb, which we expect to provide the ${\mathcal N}=4$ superpartners of the lowest spin-two mode. For such modes, the same discussion leading to (\ref{eq:V-hope}) would apply.

\item{Finally, other possibilities might arise by extending this class of backgrounds with the inclusion of branes that support four-dimensional matter and are localized at an internal locus where the higher eigenfunctions are suppressed compared to $\psi_0$. This would require to recompute the supergravity solution, as their backreaction can be severe.
}

\end{itemize}
 
In summary, both for the KR and for the present model the lowest mass $m_0$ is very small, and the next $m_1\sim 1/L_4$. However, for the KR model the suppression of the higher eigenfunctions $\psi_{k>0}$ on the brane pushes the KK length scale to $L_5$. In contrast, the models in this section don't seem to display this suppression, at least for the modes localized on the D5-branes and NS5-branes, so the KK length scale is $L_4$; for some closed-string modes it might be possible to make this somewhat lower. 

One might also wonder if it is possible to push directly $m_1$ to a higher value than $1/L_4$.  In this particular case, the bound of Prop.~\ref{prop:Sep} becomes vacuous since in the asymptotic AdS$_5\times S^5$ regions $|\dd A|$ is not bounded. We can again consider the cutoff model, where the asymptotic region is removed, and apply the bound of Prop.~\ref{prop:Sep} to the cut off space. In this case, however, the cutoff cannot be removed, as it controls both the value of $K$ and the precision of the approximation.

% subsubsection st (end)
% subsection ads (end)

\section*{Acknowledgements}

We thank B.~Assel, C.~Bachas, S.~Giri, A.~Karch, L.~Randall and A.~Zaffaroni for discussions. The authors are grateful to L.~Dello Schiavo for several suggestions.  GBDL is supported in part by the Simons Foundation Origins of the Universe Initiative (modern inflationary cosmology collaboration) and by a Simons Investigator award. AM is supported by the ERC Starting Grant 802689 ``CURVATURE''. AT is supported in part by INFN and by MIUR-PRIN contract 2017CC72MK003.

\bibliography{at}
\bibliographystyle{at}

\end{document}